\providecommand{\DontPrintSemicolon}{\dontprintsemicolon}
\newcommand{\innercode}{\ensuremath{\mathcal{C}_0}}
\newcommand{\outercode}{\ensuremath{\mathcal{C}}}
\newcommand{\innerrate}{r_0}
\newcommand{\innerdist}{\delta_0}
\newcommand{\errorrate}{\rho}
\newcommand{\innerquery}{q_0}
\newcommand{\outerquery}{q}
\newcommand{\outerrate}{r}
\newcommand{\smoothness}{s}
\newcommand{\innersmoothness}{s_0}
\newcommand{\updatedquery}{\innerquery'}
\newcommand{\R}{{\mathbb R}}
\newcommand{\F}{{\mathbb F}}
\newcommand{\PR}[1]{{\mathbb{P}}\left\{ #1\right\}}
\newcommand{\twonorm}[1]{\left\|#1\right\|_2}
\newcommand{\onenorm}[1]{\left\|#1\right\|_1}
\newcommand{\tensor}{\otimes}
\newcommand{\ind}[1]{\ensuremath{\mathbf{1}_{#1}}}
\newcommand{\poly}{\mathrm{poly}}
\newcommand{\bigoh}{\mathcal{O}}
\newcommand{\hf}{\frac{1}{2}}
\newcommand{\inset}[1]{\left\{#1\right\}}
\newcommand{\inparen}[1]{\left(#1\right)}
\newcommand{\suchthat}{\,:\,}
\newcommand{\eps}{\varepsilon}
\newcommand{\mkw}[1]{\textcolor{Maroon}{\textbf{(#1 --mary)}}}
\newtheorem{theorem}{Theorem} 
\newtheorem{lemma}[theorem]{Lemma} 
\newtheorem{definition}[theorem]{Definition}
\newtheorem{remark}{Remark}
\newtheorem{claim}[theorem]{Claim}
\newtheorem{proposition}[theorem]{Proposition}
\begin{document}

\title{Local Correctability of Expander Codes}

\author{Brett Hemenway\thanks{Department of Computer and Information Science, University of Pennsylvania, \texttt{fbrett@cis.upenn.edu}.}
\and
Rafail Ostrovsky\thanks{
Department of Computer Science and Department of Mathematics, UCLA, \texttt{rafail@cs.ucla.edu}.
	Research supported
	in part by NSF grants CNS-0830803; CCF-0916574; IIS-1065276; CCF-1016540;
	CNS-1118126; CNS-1136174; US-Israel BSF
	grant 2008411, OKAWA Foundation Research Award, IBM Faculty Research Award,
	Xerox Faculty Research Award, B. John
	Garrick Foundation Award, Teradata Research Award, and Lockheed-Martin
	Corporation Research Award. This material is also
	based upon work supported by the Defense Advanced Research Projects Agency
	through the U.S. Office of Naval Research under
	Contract N00014-11-1-0392. The views expressed are those of the author and
	do not reflect the official policy or position of the 
	Department of Defense or the U.S. Government
}
\and
 Mary Wootters\thanks{Department of Computer Science, Carnegie Mellon University, \texttt{marykw@cs.cmu.edu}.  Research supported in part by NSF grant CCF-1161233}}


\maketitle
\begin{abstract}
In this work, we present the first local-decoding algorithm for expander codes.
This yields a new family of constant-rate codes that can recover from a constant fraction of errors in the codeword symbols, 
and where any symbol of the codeword can be recovered with high probability by reading $N^\eps$ symbols from the corrupted codeword, 
where $N$ is the block-length of the code.

Expander codes, introduced by Sipser and Spielman, are formed from an expander graph $G = (V,E)$ of degree $d$, and an inner code of block-length $d$ 
over an alphabet $\Sigma$.  Each edge of the expander graph is associated with a symbol in $\Sigma$.
A string in $\Sigma^{E}$ will be a codeword if for each vertex in $V$, the symbols on the adjacent edges form a codeword in the inner code.

We show that if the inner code has a smooth reconstruction algorithm in the noiseless setting, then the corresponding expander code has an efficient 
local-correction algorithm in the noisy setting.  Instantiating our construction with inner codes based on finite geometries, 
we obtain novel locally decodable codes with rate approaching one.  This provides an alternative to the multiplicity codes of Kopparty, Saraf and Yekhanin (STOC '11) 
and the lifted codes of Guo, Kopparty and Sudan (ITCS '13).  

\end{abstract}



\section{Introduction}\label{sec:intro}
Expander codes, introduced in ~\cite{SS96}, are linear codes which are notable for their efficient decoding algorithms.  In this paper, we show that when appropriately instantiated, expander codes are also \em locally decodable, \em and we give a sublinear time local-decoding algorithm.

In standard error correction, 
a sender encodes a message $x \in \{0,1\}^k$ as a codeword $c \in \{0,1\}^N$, and 
transmits it to a receiver across a noisy channel.  The receiver's goal is to recover $x$ from the corrupted codeword $w$. 
Decoding algorithms typically process all of $w$ and in turn recover all of $x$.
The goal of local decoding is to recover only a single bit of $x$, with the
benefit of querying only a few bits of $w$.  The number of bits of $w$
needed to recover a single bit $x$ is known as the \emph{query complexity}, and is denoted $\outerquery$.
The important trade-off in local decoding is between query complexity and  the
rate  $\outerrate = k/N$ of the code.  When $\outerquery$ is constant or even
logarithmic in $k$, the best known codes have rates which tend to zero as $N$
grows.
The first locally decodable codes to achieve sublinear locality and rate approaching one were the multiplicity codes of Kopparty, Saraf and Yekhanin \cite{KSY11}.
Prior to this work, only two constructions of locally decodable codes were known with sublinear locality and rate approaching one \cite{KSY11,GKS12}.  
In this paper, we show that expander codes provide a third construction of efficiently locally decodable codes with rate approaching one. 

\subsection{Notation and preliminaries}
Before we state our main results, we set notation and give a few definitions.
We will construct linear codes $\outercode$ of length $N$ and message length $k$, over an alphabet $\Sigma = \F$, for some finite field $\F$.  That is, $\outercode \subset \F^N$ is a linear subspace of dimension $k$.  The \em rate \em of $\outercode$ is the ratio $\outerrate = k/N$.  
We will also use expander graphs: we say a $d$-regular graph $G$ is a \em spectral expander \em with parameter $\lambda$, if $\lambda$ is the second-largest eigenvalue of the normalized adjacency matrix of $G$.  Intuitively, the smaller $\lambda$ is, the better connected $G$ is---see~\cite{HLW06} for a survey of expanders and their applications.
For $n \in \mathbb{Z}$, $[n]$ denotes the set $\{1,2,\ldots, n\}$.
For $x,y \in \Sigma^N$, $\Delta(x,y)$ denotes relative Hamming distance, $x[i]$ denotes the $i^{th}$ symbol of $x$, and $\left. x\right|_S$ denotes $x$ restricted to symbols indexed by $S \subset [N]$. 

A code (along with an encoding algorithm) is \em locally decodable \em if there is an algorithm which can recover a symbol $x[i]$ of the message, making only a few queries to the received word.
\begin{definition}[Locally Decodable Codes (LDCs)]\label{def:ldc}
Let $\mathcal{C} \subset \Sigma^N$ be a code of size $|\Sigma|^k$, and let $E:\Sigma^k \to \Sigma^N$ be an encoding map.  Then $(\mathcal{C},E)$ is $(q,\errorrate)$-locally decodable with error probability $\eta$ if there is a randomized algorithm $R$, so that for any $w \in \Sigma^N$ with $\Delta(w, E(x)) < \errorrate$, for each $i \in [k]$,
\[ \PR{ R(w,i) = x[i] } \geq 1 - \eta,\]
and further $R$ accesses at most $q$ symbols of $w$.  Here, the probability is taken over the internal randomness of the decoding algorithm $R$.
\end{definition}
In this work, we will actually construct \em locally correctable codes, \em which we will see below imply locally decodable codes. 
\begin{definition}[Locally Correctable Codes (LCCs)]\label{def:lcc}
Let $\mathcal{C} \subset \Sigma^N$ be a code, and let $E:\Sigma^k \to \Sigma^N$ be an encoding map.  Then $\mathcal{C}$ is $(q, \errorrate)$-locally correctable with error probability $\eta$ if there is a randomized algorithm, $R$, so that for any $w \in \F^N$ with $\Delta(w,E(x)) < \errorrate$, for each $j \in [N]$,
\[ \PR{ R(w,j) = w[j] } \geq 1 - \eta,\]
and further $R$ accesses at most $q$ symbols of $w$.  Here, the probability is taken over the internal randomness of the decoding algorithm $R$.
\end{definition}

Thus the only difference between locally correctable codes and locally decodable codes is that locally correctable codes can recover symbols of the \emph{codeword} while 
locally decodable codes recover symbols of the \emph{message}.

When there is a constant $\errorrate > 0$ and a failure probability $\eta = o(1)$ so that $\mathcal{C}$ is $(q,\errorrate)$-locally correctable with error probability $\eta$, we will simply say that $\mathcal{C}$ is locally correctable with query complexity $q$ (and similarly for locally decodable).

When $\mathcal{C}$ is a linear code, writing the generator matrix in systematic form gives an encoding function $E:\F^k \to \F^N$ so that for every $x \in \F^k$ and for all $i \in [k]$, $E(x)[i] = x[i]$.  In particular, if $\mathcal{C}$ is a $(q,\errorrate)$ linear LCC, then $(E, \mathcal{C})$ is a $(q,\errorrate)$ LDC.  Because of this connection, we will focus our attention on creating locally correctable linear codes.

Many LCCs work on the following principle: suppose, for each $i \in [N]$, there is a set of $q$ query positions $Q(i)$, which are \em smooth\em---that is, each query is almost uniformly distributed within the codeword---and a method to determine $c[i]$ from $\inset{c[j] \suchthat j \in Q(i)}$ for any uncorrupted codeword $c \in \mathcal{C}$.  If $q$ is constant, 
this \em smooth local reconstruction algorithm \em yields a local correction algorithm: 
with high probability none of the locations queried are corrupted.
In particular, by a union bound, the smooth local reconstruction algorithm is a local correction algorithm that fails with probability at most $\errorrate \cdot q$.
This argument is effective when $q = \bigoh(1)$, however, when $q$ is merely sublinear in $N$, as is the case for us, this reasoning fails.  
This paper demonstrates how to turn codes which only possess a local reconstruction procedure (in the noiseless setting) into LCCs with constant rate and sublinear query complexity. 

\begin{definition}[Smooth reconstruction]\label{def:recon}
For a code $\mathcal{C} \subset \Sigma^N$, consider a pair of algorithms $(Q,A)$, where $Q$ is a randomized query algorithm with inputs in $[N]$ and outputs in $2^N$, and $A:\Sigma^q \times [N] \to \Sigma$ is a deterministic reconstruction algorithm.  
We say that $(Q,A)$ is a \em $\smoothness$-smooth local reconstruction algorithm \em with query complexity $q$ if the following hold.  
\begin{enumerate}
\item For each $i \in [N]$, the query set $Q(i)$ has $|Q(i)| \leq q$.
\item For each $i \in [N]$, there is some set $S \subset [N]$ of size $\smoothness$, so that each query in $Q(i)$ is uniformly distributed in $S$. 
\item For all $i \in [N]$ and for all codewords $c \in \mathcal{C}$, $A( \left.c\right|_{Q(i)},i ) = c[i].$
\end{enumerate}
\end{definition}

If $\smoothness = N$, then we say the reconstruction is perfectly smooth, since all symbols are equally likely to be queried.  Notice that the queries need not be independent.
The codes we consider in this work decode a symbol indexed by $x \in \F^m$ by querying random subspaces through $x$ (but not $x$ itself), and thus will have $\smoothness = N-1$.

\subsection{Related work}

The first local-decoding procedure for an error-correcting code was the
majority-logic decoder for Reed-Muller codes proposed by Reed \cite{R54}.
Local-decoding procedures have found many applications in theoretical computer
science including proof-checking \cite{L90,BFLS91,PS94}, self-testing
\cite{BLR90,BLR93,GLRSW91,GS92} and fault-tolerant circuits \cite{S96a}.  
While these applications implicitly used local-decoding procedures, the first explicit definition of locally decodable
codes did not appear until later \cite{KT00}.  An excellent survey is available \cite{Y10}.
The study of locally decodable codes
focuses on the trade-off between rate (the ratio of message length to codeword
length) and query complexity (the number of queries made by the decoder).
Research in this area is separated into two distinct areas: the first seeks to
minimize the query complexity, while the second seeks to maximize the rate.  In
the low-query-complexity regime, Yekhanin was the first to exhibit codes with a
constant number of queries and a subexponential rate \cite{Y07}.  Following
Yekhanin's work, there has been significant progress in constructing
locally decodable codes with constant query-complexity
\cite{Y08,E09,DGY11,BET10,IS10,CFLWZ11,BIKO12,E12}.  On the other hand, in the
high-rate regime, there has been less progress.  In 2011, Kopparty, Saraf and
Yekhanin introduced \emph{multiplicity codes}, the first codes
with a sublinear local-decoding algorithm \cite{KSY11} and rate approaching one.  
Like Reed-Muller codes, multiplicity codes treat the message as a multivariate polynomial, and create
codewords by evaluating the polynomial at a sequence of points.  Multiplicity
codes are able to improve on the performance of Reed-Muller codes by also
including evaluations of the partial derivatives of the message polynomial in
the codeword.  A separate line of work has developed high-rate locally decodable codes by
``lifting'' shorter codes \cite{GKS12}.  
The work of Guo, Kopparty and Sudan
takes a short code $\innercode$ of length $|\F|^t$, and lifts it to a longer code
$\outercode$, of length $|\F|^m$ for $m > t$ over $\F$, such that every restriction of a codeword in
$\outercode$ to an affine subspace of dimension $t$ yields a codeword in
$\innercode$.  
The definition provides a natural local-correcting
procedure for the outer code: to decode a symbol of the outer code, 
pick a random affine subspace of dimension $t$ that contains the symbol, read
the coordinates and decode the resulting codeword using the code $\innercode$.
Guo, Kopparty and Sudan show how to lift explicit inner codes so that the outer
code has constant rate and query complexity $N^\eps$.  

In this work, 
we show that \em expander codes \em can also give locally decodable codes
with rate approaching one, and
with query complexity $N^\eps$.
Expander codes, introduced by Sipser and Spielman \cite{SS96},
are formed by choosing a $d$-regular expander graph, $G$ on $n$ vertices, and a
code $\innercode$ of length $d$ (called the \emph{inner code}), and defining
the codeword to be all assignments of symbols to the edges of $G$ so that for
every vertex in $G$, its edges form a codeword in $\innercode$.  The connection
between error-correcting codes and graphs was first noticed by Gallager
\cite{G63} who showed that a random bipartite graph induces a good
error-correcting code.  Gallager's construction was refined by Tanner
\cite{tanner81}, who suggested the use of an inner code.  Sipser and Spielman
\cite{SS96} were the first to consider this type of code with an expander
graph, and Spielman \cite{S96} showed that these \emph{expander codes} could be
encoded and decoded in linear time.  Spielman's work provided the first family
of error-correcting code with linear-time encoding and decoding procedures.
The decoding procedure has since been improved by Barg and Zemor
\cite{zemor01,BZ02,BZ05,BZ06}.  

\subsection{Our approach and contributions}
We show that certain expander codes can be efficiently locally decoded, and we instantiate our results to obtain novel families of $(N^\eps, \rho)$-LCCs of rate $1 - \alpha$, for any positive constants $\alpha, \eps$ and some positive constant $\rho$.
Our decoding algorithm runs in time linear in the number of queries, and hence sublinear in the length of the message.
We provide a general method for turning codes with smooth local reconstruction algorithms into LCCs:
our main result, Theorem \ref{thm:tannercorrection}, states that as long as the inner code $\innercode$ has rate at least $1/2$ and possesses a smooth local reconstruction algorithm, then the corresponding family of expander codes are constant rate LCCs.  
In Section \ref{sec:tannerexamples}, we give some examples of appropriate inner codes, leading to the parameters claimed above.

In addition to providing a sublinear time local decoding algorithm for an important family of codes, our constructions are only the third known example of LDCs with rate approaching one, after multiplicity codes~\cite{KSY11} and lifted Reed-Solomon codes~\cite{GKS12}.
Our approach (and the resulting codes) are very different from earlier approaches.  Both multiplicity codes and lifted Reed-Solomon codes use the same basic principle, also at work in Reed-Muller codes: in these schemes, for any two codewords $c_1$ and $c_2$ which differ at index $i$, the corresponding queries $\left. c_1 \right|_{Q(i)}$ and $\left. c_2 \right|_{Q(i)}$ differ in many places.  Thus, if the queries are smooth, with high probability they will not have too many errors, and the correct symbol can be recovered.
In contrast, our decoder works differently: while our queries are smooth, they will not have this distance property.  In fact, changing a mere $\log(q)$ out of our $q$ queries may change the correct answer. 
The trick is that these problematic error patterns must have a lot of structure, and we will show that they are unlikely to occur.

Finally, our results port a typical argument from the low-query regime to the high-rate regime.
As mentioned above, when the query complexity $q$ is constant, a smooth local reconstruction algorithm is sufficient for local correctability.  However, this reasoning fails when $q$ grows with $N$.  In this paper, we show how to make this argument go through: via Theorem \ref{thm:tannercorrection}, any family of codes $\innercode$ with good rate and a smooth local decoder can be used to obtain a family of LCCs with similar parameters.

\section{Local correctability of expander codes}\label{sec:tanner}
In this section, we give an efficient local correction algorithm for expander codes with appropriate inner codes.
We use a formulation of expander codes due to~\cite{zemor01}.
Let $G$ be a $d$-regular expander graph on $n$ vertices with expansion parameter $\lambda$. 
We will take $G$ to be a \em Ramanujan graph, \em  that is, so that $\lambda \leq \frac{2\sqrt{d-1}}{d}$;
explicit constructions of Ramanujan graphs are known~\cite{LPS88,Mar88,Mor94} for arbitrarily large values of $d$.  
Let $H$ be the double cover of $G$.  
That is, $H$ is a bipartite graph whose vertices $V(H)$ are two disjoint copies $V_0$ and $V_1$ of $V(G)$, and so that 
\[ E(H) = \inset{ ( u_0, v_1 ) \suchthat (u,v) \in E(G) },\]
where $u_i$ denotes the copy of $u$ in $V_i$.
Fix a linear inner code $\innercode$ over $\Sigma$ of rate $\innerrate$ and relative distance $\innerdist$.
Let $N = nd$.  For $v_i \in V(H)$, let $E(v_i)$ denote the edges attached to $v$.
The expander code $\outercode \subset \Sigma^{N}$ of length $N$ arising from $G$ and $\innercode$ is given by
\begin{equation}\label{eq:tanner}
\outercode = \outercode_N( \innercode, G) = \inset{ x \in \Sigma^N \suchthat \left. x\right|_{E(v_i)} \in \innercode \text{ for all }  v_i \in V(H) }
\end{equation}
The following theorem states that
as long as the inner code $\innercode$ has good rate and distance, so does the resulting code $\outercode$.
\begin{theorem}[\cite{tanner81,SS96}]
\label{thm:tannercode}
The code $\outercode$ has rate $r \ge  2\innerrate - 1$, and as long as $2 \lambda \leq \innerdist$, the relative distance of $\outercode$ is at least $\innerdist^2/2$.
\end{theorem}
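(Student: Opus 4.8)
The plan is to establish the two claims separately: the rate bound by a dimension count on the linear constraints defining $\outercode$, and the distance bound by pairing the inner-code distance with the expander mixing lemma on the bipartite graph $H$.

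For the rate, I would note that $\outercode$ consists of those $x \in \Sigma^N$ satisfying $\left.x\right|_{E(v_i)} \in \innercode$ for each of the $2n$ vertices $v_i \in V(H)$. Since $\innercode$ is linear of length $d$ and dimension $\innerrate d$, each such constraint is a homogeneous linear system of $(1-\innerrate)d$ equations over $\Sigma$, so $\outercode$ is the solution space of at most $2n(1-\innerrate)d$ linear equations in $\Sigma^N$; as $N = nd$ this forces $\dim\outercode \ge nd - 2n(1-\innerrate)d = nd(2\innerrate - 1)$, i.e. $r \ge 2\innerrate - 1$ (there is nothing to prove when the right-hand side is non-positive).

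For the distance, I would fix a nonzero codeword $x \in \outercode$, put $S = \supp(x) \subseteq E(H)$, and let $A \subseteq V_0$ and $B \subseteq V_1$ be the vertices of $H$ incident to some edge of $S$. Each $u \in A$ has $\left.x\right|_{E(u)}$ a \emph{nonzero} codeword of $\innercode$, hence of Hamming weight at least $\innerdist d$; summing this bound over $u \in A$ (each edge of $S$ meets $V_0$ exactly once, as $H$ is bipartite) gives $|S| \ge \innerdist d\,|A|$, so $|A| \le |S|/(\innerdist d)$, and symmetrically $|B| \le |S|/(\innerdist d)$. Since every edge of $S$ runs between $A$ and $B$, and the nontrivial spectrum of the double cover $H$ agrees up to sign with that of $G$, the bipartite expander mixing lemma gives
\[ |S| \;\le\; \frac{d\,|A|\,|B|}{n} + \lambda d\sqrt{|A|\,|B|}. \]
Plugging in the bounds on $|A|$ and $|B|$ and dividing through by $|S| > 0$ yields $1 \le |S|/(\innerdist^2 d n) + \lambda/\innerdist$, i.e. $|S|/N \ge \innerdist^2 - \innerdist\lambda$; the hypothesis $2\lambda \le \innerdist$ then gives $\innerdist\lambda \le \innerdist^2/2$, so $|S|/N \ge \innerdist^2/2$, which is the asserted relative distance.

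I expect the only delicate point to be the application of the expander mixing lemma to the bipartite double cover $H$ and the bookkeeping around the normalization of $\lambda$ (the second-largest eigenvalue of the \emph{normalized} adjacency matrix of $G$, which is precisely what puts the factor $d$ in front of the error term $\lambda d\sqrt{|A|\,|B|}$, and for which the restriction to Ramanujan $G$ also controls the most negative eigenvalue); everything else is elementary counting. As this is essentially the classical Sipser--Spielman/Tanner estimate, I would cite it as such and retain only the short derivation above for self-containedness.
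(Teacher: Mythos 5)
Your proof is correct. The paper itself only cites this theorem and sketches the rate bound; your dimension count is the same argument (and in fact tidies the paper's slightly inconsistent bookkeeping, which counts $n$ vertices against a length of $nd/2$ rather than $2n$ vertices against $N = nd$ --- the two versions agree). Your distance argument is the standard Sipser--Spielman/Z\'emor derivation via the bipartite expander mixing lemma, which the paper delegates entirely to the citations, and you correctly flag the one delicate point: the mixing lemma on the double cover requires control of the second \emph{singular} value of $G$'s adjacency matrix (i.e., of the most negative eigenvalue as well as $\lambda_2$), which the Ramanujan hypothesis supplies.
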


Notice that when $\innerrate < \hf$, Theorem \ref{thm:tannercode} is meaningless.
The rate in Theorem \ref{thm:tannercode} comes from the fact that $\innercode$ has rate $\innerrate$, so each vertex induces $(1-\innerrate)d$ linear constraints, and 
there are $n$ vertices, so the outer code has $nd(1-\innerrate)$ constraints.  Since the outer code has length $N = nd/2$, its rate is at least $2\innerrate - 1$.
This na\"{\i}ve lower bound on the rate ignores the possibility that the constraints induced by the different vertices may not all be independent.
It is an interesting question whether for certain inner codes, a more careful counting of constraints could yield a better lower bound on the rate.
The ability to use inner codes of rate less than $\hf$ would permit much more flexibility in the choice of inner code in our constructions.

The difficulty of a more sophisticated lower bound on the rate was noticed by Tanner, who pointed out that simply permuting the codewords associated with a given 
vertex could drastically alter the parameters of the outer code \cite{tanner81}.

\subsection{Local Correction}\label{sec:tannerdecoding}
If the inner code $\innercode$ has a smooth local reconstruction procedure, then not only does $\outercode$ have good distance, but 
we show it can also be efficiently locally corrected.
Our main result is the following theorem.
\begin{theorem}\label{thm:tannercorrection}
Let $\innercode$ be a linear code over $\Sigma$ of length $d$ and rate $\innerrate > 1/2$.
Suppose that $\innercode$ has a $\innersmoothness$-smooth local reconstruction procedure with query complexity $\innerquery$.
Let $\outercode = \outercode_N(\innercode, G)$ be the expander code of length $N$ arising from the inner code $\innercode$ and a Ramanujan graph $G$.
Choose any $\gamma < 1/2$ and any $\zeta > \gamma$ satisfying
$\gamma \inparen{ e^\zeta \innerquery }^{-1/\gamma} > 8\lambda.$
Then $\outercode$ is $(q,\rho)$-locally correctable, 
for any error rate $\errorrate$, with $\errorrate < \gamma \inparen{ e^\zeta \innerquery }^{-1/\gamma} - 2 \lambda$.
The success probability is
\[ 1 - \inparen{ \frac{N}{d} }^{-1/\ln(d/4)} \]
and the query complexity is 
\[ \outerquery = \inparen{ \frac{N}{d} }^{ \eps } \qquad \text{ where } \qquad
 \eps = \inparen{ 1 + \frac{ \ln(\updatedquery) + 1 }{\zeta - \gamma} } \cdot \frac{ \ln(\updatedquery)}{\ln(d/4)}.\]
Further, when the length of the inner code, $d$, is constant, the correction algorithm runs in time $O( |\Sigma|^{\updatedquery + 1} \outerquery )$,
where $\updatedquery = \innerquery + (d - \innersmoothness)$.
\end{theorem}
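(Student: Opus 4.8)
\textbf{The plan} is to exhibit a recursive, tree‑shaped local corrector for $\outercode$, and then to argue that the only error patterns that can defeat it are so structured that they are very unlikely relative to the corrector's internal randomness. Throughout write $n = N/d$ for the number of vertices of $G$, so the $N$ coordinates of $\outercode$ are the edges of the double cover $H$, and every codeword restricted to the $d$ edges at any vertex of $H$ lies in $\innercode$.

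\emph{The corrector.} To recover the symbol on an edge $e=(u_0,v_1)$, fix one endpoint, say $u_0$. Since $\left.x\right|_{E(u_0)}\in\innercode$, run the $\innersmoothness$-smooth reconstruction $(Q,A)$ of $\innercode$ for the coordinate $e$: this names a set of at most $\innerquery$ edges at $u_0$, each uniform in a support $S\subseteq E(u_0)$ of size $\innersmoothness$; together with the $\le d-\innersmoothness$ edges of $E(u_0)$ lying outside $S$ this is a set of at most $\updatedquery=\innerquery+(d-\innersmoothness)$ edges at $u_0$. Instead of reading those symbols of $w$, recurse: each of them is corrected by the same procedure applied at its \emph{other} endpoint, so the recursion alternates the two sides of $H$. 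Truncate at depth $T$ chosen so that $\updatedquery^{\,T}=(N/d)^{\eps}$, i.e.
\[ T \;=\; \inparen{1+\frac{\ln(\updatedquery)+1}{\zeta-\gamma}}\cdot\frac{\ln(N/d)}{\ln(d/4)}\;\ge\;\frac{\ln(N/d)}{\ln(d/4)}, \]
and at the leaves read the relevant symbols directly from $w$. At each internal node, having recursively obtained (claimed) values for the $\le\updatedquery$ star‑edges, perform a bounded‑distance decoding of $\innercode$ on $E(u_0)$ from those values and output the symbol it assigns to $e$; when $d=O(1)$ this is $O(|\Sigma|^{\updatedquery+1})$ work per node, so the total running time is $O(|\Sigma|^{\updatedquery+1}\outerquery)$. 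The tree has at most $\updatedquery^{\,T}=(N/d)^{\eps}$ nodes, giving the stated query complexity with the stated $\eps$ (and the rate bound $r\ge 2\innerrate-1>0$ is immediate from Theorem~\ref{thm:tannercode}).

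\emph{Correctness.} Fix $w$ with $\Delta(w,E(x))<\errorrate$ and let $B\subseteq E(H)$, $|B|<\errorrate N$, be the corrupted edges. The naive analysis — a union bound over the $(N/d)^{\eps}$ leaves, each corrupted with probability $\approx\errorrate$ — is useless, since $\errorrate(N/d)^{\eps}$ is not $o(1)$; this is exactly the obstruction flagged in the introduction. Instead I would track, level by level, the ``dangerous'' vertices: call $v$ \emph{$0$‑dangerous} if a $\gamma$‑fraction of its edges lie in $B$, and \emph{$(j{+}1)$‑dangerous} if a $\gamma$‑fraction of its edges lead to $j$‑dangerous vertices. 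The first, purely deterministic, step is an expander‑mixing estimate: because $G$ is Ramanujan ($\lambda\le 2\sqrt{d-1}/d$) and $\errorrate<\gamma(e^{\zeta}\innerquery)^{-1/\gamma}-2\lambda$, the density of $j$‑dangerous vertices starts below the threshold the mixing lemma can absorb (this is where the hypothesis $\gamma(e^{\zeta}\innerquery)^{-1/\gamma}>8\lambda$ and the error‑rate bound are used, and why $\innerquery$ — the genuine fan‑out of the smooth reconstruction, through which a node can actually be led astray — appears there) and contracts by a constant factor per level, so there are no $j$‑dangerous vertices once $j\gtrsim \ln(N/d)/\ln(d/4)$. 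The second, probabilistic, step is to show the corrector fails only if one of the root‑to‑leaf paths it follows — each of which, by smoothness, is close to a random walk on $G$ — enters the (shrinking) dangerous region before the recursion bottoms out, since at any vertex that is not dangerous the bounded‑distance decoding tolerates the few erroneous children that the recursion below it can produce. A failure of this sort forces the error pattern to be $\gamma$‑heavy along a connected, depth‑$\Omega(T)$ subgraph of $G$; since the number of connected subgraphs of $G$ of a given size through a fixed vertex is only singly exponential in that size, while each is $\gamma$‑heavy with probability exponentially small in it, a union bound over \emph{subgraphs} (not leaves) drives the failure probability down to $(N/d)^{-1/\ln(d/4)}=e^{-\ln(N/d)/\ln(d/4)}$. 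The extra factor $1+(\ln\updatedquery+1)/(\zeta-\gamma)$ in $T$ is precisely the slack that lets this bound beat the $\updatedquery^{\,T}$ branching of the tree and the $O(1)$ probability that a single non‑dangerous node still queries too many dangerous edges.

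\emph{Main obstacle.} The heart of the argument is the second step: the union bound over leaves must be replaced by one over structured error configurations, exploiting that a genuine failure forces the \emph{same} heavily corrupted region of $G$ to be met along many branches of the recursion, so that failures correspond to rare, geometrically confined substructures of $B$ rather than to independent bad queries — and, relatedly, that the brute‑force step at each internal node must be made robust (bounded‑distance decoding of $\innercode$ on the star rather than a bare application of $A$), or else a uniformly spread error pattern of constant density would already defeat the corrector. Calibrating the depth $T$, the branching $\updatedquery$, the dangerous‑density decay rate coming from $\lambda\le 2\sqrt{d-1}/d$, and the free parameters $\gamma,\zeta$ so that all of these balance is the bulk of the work; the remaining bookkeeping (the query and time counts, the reduction to Theorem~\ref{thm:tannercode}) is routine.
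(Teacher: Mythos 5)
Your skeleton matches the paper's: a recursive query tree of branching $\updatedquery$ built by following the inner code's smooth reconstruction across the expander, truncated at depth $\approx \ln(N/d)/\ln(d/4)$ so that the leaves are near-uniform and the query count is $\updatedquery^{T}=(N/d)^{\eps}$. But the mechanism you propose for actually extracting the correct symbol has a genuine gap. You decode each star $E(u_0)$ by \emph{bounded-distance decoding of $\innercode$}, and your whole correctness argument (``at any vertex that is not dangerous the bounded-distance decoding tolerates the few erroneous children'') leans on $\innercode$ correcting a $\gamma$-fraction of errors on its star. The theorem assumes no such thing: $\innercode$ is only required to have rate $>1/2$ and a \emph{noiseless} smooth reconstruction, and in the intended instantiations its rate is $1-\alpha/2$, so its relative distance is far below $2\gamma$ and per-node error correction is impossible. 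The paper's decoder applies the exact reconstruction $A_0$ at every node — so a single wrong child can flip a node — and instead defeats errors globally: Proposition~\ref{prop:badpath} shows that two legitimate query trees with different root symbols must disagree along an \emph{entire} root-to-leaf path, so the decoder outputs the symbol $a$ minimizing $\textsf{Score}(a)=\min_{\sigma^*:\mathrm{root}(\sigma^*)=a}\max_P\Delta(\sigma^*|_P,\tau|_P)$ (computable by the dynamic program of Algorithm~\ref{algo:round}); correctness then reduces to showing that no root-to-leaf path of the observed tree is more than $\gamma$-corrupted, which is an expander Chernoff bound for random walks on $H$ (Lemma~\ref{lem:pathsaregood}) plus a union bound over the $\innerquery^{L_1+L_2}$ paths. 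This path-max metric and the ``all-or-nothing along a path'' dichotomy are the missing ideas in your write-up.

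Two further problems. First, your probabilistic accounting treats the error set as random: ``each [connected subgraph] is $\gamma$-heavy with probability exponentially small in it'' is not a meaningful statement when $B$ is adversarial and fixed; the only randomness is the decoder's, and the correct statement is that a \emph{random walk} is unlikely to spend a $\gamma$-fraction of its steps inside the fixed set $B$ — there is no union bound over subgraphs of $G$ in a correct argument, only over the decoder's own paths. Second, the paper's two-stage structure (a depth-$L_1$ tree whose leaves are near-uniform edges, then a fresh depth-$L_2$ tree rooted at each such leaf, with $L_2=CL_1$ and $C<(1+\ln\innerquery)/(\zeta-\gamma)$) is what lets the walk-starting distributions satisfy the hypothesis of Lemma~\ref{lem:pathsaregood} and lets the per-tree failure probability beat the union bound over the $\innerquery^{L_1}$ leaves of the first tree; your single-stage tree with leaf reads from $w$ does not reproduce this calibration. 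Your deterministic ``dangerous-vertex contraction'' is reminiscent of the Sipser--Spielman/Zemor analysis of the \emph{global} decoder, but that analysis also requires large inner distance and does not transfer here.
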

\begin{remark} We will choose $d$ (and hence $\updatedquery < d$) and $|\Sigma|$ to be constant.  Thus, the rate of $\outercode$, as well as the parameters $\errorrate$ and $\eps$, will be constants independent of the block length $N$. 
The parameter $\zeta$ trades off between the query complexity and the allowable error rate.  
When $\innerquery$ is much smaller than $d$ (for example, $\innerquery = 3$ and $d$ is reasonably large), we will want to take $\zeta = O(1)$.  On the other hand, if $\innerquery = d^\eps$ and $d$ is chosen to be a sufficiently large constant, we should take $\zeta$ on the order of $\ln(\innerquery)$. 
\end{remark}

Before diving into the details, we outline the correction algorithm. 
First, we observe that it suffices to consider the case when $Q_0$ is perfectly smooth: that is, the queries of the inner code are uniformly random.
Otherwise, if $Q_0$ is $\innersmoothness$-smooth with $\innerquery$ queries, we may modify it so that it is $d$-smooth with $\innerquery + (d - \innersmoothness)$ queries, by having it query extra points and then ignore them.  
Thus, we set $\updatedquery = \innerquery$ and assume in the following that $Q_0$ makes $\innerquery$ perfectly smooth queries.

Suppose that $\innercode$ has local reconstruction
algorithm $(Q_0,A_0)$, and we receive a corrupted codeword, $w$, which differs from a correct codeword $c^*$ in at most a $\errorrate$ fraction of the entries.  Say we wish to determine $c^*[(u_0,v_1)]$, for $(u_0,v_1) \in E(H)$. 
The algorithm proceeds in two steps.  The first step is to find a set of about $N^{\eps/2}$ query positions which are nearly uniform in $[N]$, and whose correct values together determine $c^*[(u_0,v_1)]$.  The second step is to correct each of \em these \em queries with very high probability---for each, we will make another $N^{\eps/2}$ or so queries.
\paragraph{Step 1.} 
By construction, $c^*[(u_0,v_1)]$ is a symbol in a codeword of the inner code,
$\innercode$, which lies on the edges emanating from $u_0$.  By applying
$Q_0$, we may choose $\innerquery$ of these edges, $S = \inset{ (u_0,s_1^{(i)})
\suchthat i \in [\innerquery] }$, so that 
\[ A_0 \inparen{ \left.c^*\right|_S, (u_0,v_1)} = c[(u_0,v_1)].\]
Now we repeat on each of these edges: each
$(u_0, s_1^{(i)})$ is part of a codeword emanating from $s_1^{(i)}$, and so
$\innerquery$ more queries determine each of those, and so on.  Repeating this
$L_1$ times yields a $\innerquery$-ary tree $T$ of depth $L_1$, whose nodes are
labeled by of edges of $H$.  This tree-making procedure is given more precisely
below in Algorithm \ref{algo:maketree}.  Because the queries are smooth, each
path down this tree is a random walk in $H$; because $G$ is an expander, this
means that the leaves themselves, while not independent, are each close to
uniform on $E(H)$. 
Note that at this point, we have not made any queries, merely documented a tree, $T$, of edges
we could query.

\paragraph{Step 2.}
Our next step is to actually make queries to determine the correct values on the edges represented in the leaves of $T$.
By construction, these values determine $c^*[(u_0,v_1)]$.
Unfortunately, in expectation a $\errorrate$ fraction of the leaves are corrupted, and without further constraints on $\innercode$, even one corrupted leaf is enough to give the wrong answer.  To make sure that we get all of the leaves correct, we use the fact that each leaf corresponds to a position in the codeword that is nearly uniform (and in particular nearly independent of the location we are trying to reconstruct).   For each edge, $e$, of $H$ that shows up on a leaf of $T$, we repeat the tree-making process beginning at this edge, resulting in new $\innerquery$-ary trees $T_e$ of depth $L_2$.  This time, we make all the queries along the way, resulting in an evaluated tree $\tau_e$, whose nodes are labeled by elements of $\Sigma$; the root of $\tau_e$ is the $e$-th position in the corrupted codeword, $w[e]$, and we hope to correct it to $c^*[e]$.

For a fixed edge, $e$, on a leaf of $T$, we will correct the root of $\tau = \tau_e$ with very high probability, large enough to tolerate a union bound over all the trees $\tau_e$.
For two labelings $\sigma$ and
$\nu$ of the same tree by elements of $\Sigma$, we define the distance
\begin{equation}\label{eq:pathdist}
D(\sigma, \nu) = \max_P \Delta\inparen{ \left. \sigma \right|_{P}, \left. \nu \right|_{P} },
\end{equation}
where the maximum is over all paths $P$ from the root to a leaf,  and $\left. \sigma \right|_P$ denotes the restriction of $\sigma$ to $P$.
We will show below in Section \ref{sec:correctness} that it is very unlikely
that $\tau$ contains a path from the root to a leaf with more than a
constant fraction $\gamma < 1/2$ of errors.   Thus,
in the favorable case, the distance between the correct tree $\tau^*$ arising from $c^*$ and the observed tree $\tau$ is at most $D(\tau^*,\tau) \leq \gamma$.  In contrast, we will show that if $\sigma^*$ and $\tau^*$ are both trees arising from legitimate codewords with distinct roots, then $\sigma^*$ and $\tau^*$ must differ on an entire path $P$, and so $D(\sigma^*, \tau) > 1 - \gamma$.  
To take advantage of this, we show in 
Algorithm \ref{algo:round} how to efficiently compute
\[ \textsf{Score}(a) = \min_{\sigma^* : \text{root}(\sigma^*) = a} D(\sigma^*, \tau) \]
for all $a$, where $\text{root}(\sigma^*)$ denotes the label on the root of $\sigma^*$.  The above argument (made precise below in Section \ref{sec:correctness}) shows that there will be a unique $a \in \Sigma$ with score less than $\gamma$, and this will be the correct symbol $c^*[e]$. 

Finally, with all of the leaves of $T$ correctly evaluated, we may use $A_0$ to work our way back up $T$ and determine the correct symbol corresponding to the edge at the root of $T$.
The complete correction algorithm is given below in Algorithm \ref{algo:correct}.
\begin{algorithm}[!ht]
\DontPrintSemicolon
\KwIn{An index $e_0 \in E(H)$, and a corrupted codeword $w \in \Sigma^{E(H)}$.}
\KwOut{With high probability, the correct value of the $e_0$'th symbol.}
Set $L_1 = \log(n)/\log(d/4)$ and fix a parameter $L_2$\;
$T =$ \textsf{makeTree}$( e_0, L_1 )$\;
\For{each edge $e$ of $H$ that showed up on a leaf of $T$}
{
	$T_e =$ \textsf{makeTree}$(e, L_2)$\;
	Let $\tau_e = \left. T_e \right|_{w}$ be the tree of symbols from $w$\;
	$w^*[e] = \text{\textsf{correctSubtree}}(\tau_e)$\;
}
Initialize a $\innerquery$-ary tree $\tau^*$ of depth $L_1$\;
Label the leaves of $\tau^*$ according to $T$ and $w^*$: if a leaf of $T$ is labeled $e$, label the corresponding leaf of $\tau^*$ with $w^*[e]$.\; 
Use the local reconstruction algorithm $A_0$ of $\innercode$ to label all the nodes in $\tau^*$\;
\Return The label on the root of $\tau^*$
\caption{\textsf{correct}: Local correcting protocol.}
\label{algo:correct}
\end{algorithm}
\begin{algorithm}[!ht]
\DontPrintSemicolon
\KwIn{An initial edge $e_0 = (u_0,v_1) \in E(H)$, and a depth $L$.}
\KwOut{A $\innerquery$-ary tree $T$ of depth $L$, whose nodes are indexed by edges of $H$, with root $e_0$}

Initialize a tree $T$ with a single node labeled $e_0$\; 
$s = 0$\;
\For{ $\ell \in [L]$}{
	Let \textsf{leaves} be the current leaves of $T$\;
	\For{$e = (u_s,v_{1 - s}) \in $ \textsf{leaves}}{
		Let $\inset{ v_{1-s}^{(i)} \suchthat i \in [d] }$ be the neighbors of $u_s$ in $H$\;
		Choose queries $Q_0(e) \subset \inset{ (u_s, v_{1-s}^{(i)}) \suchthat i \in [d]}$, and add each query in $T$ as a child at $e$.\;
	}
	$s = 1 - s$\;
}
\Return $T$\;
\caption{\textsf{makeTree}: Uses the local correction property of $\innercode$ to construct a tree of indices.}
\label{algo:maketree}
\end{algorithm}
\begin{algorithm}[!ht]
\DontPrintSemicolon
\KwIn{$\tau$, a $\innerquery$-ary tree of depth $L$ whose nodes are labeled with elements of $\Sigma$.}
\KwOut{A guess at the root of the correct tree $\tau$.}
For a node $x$ of $\tau$, let $\tau[x]$ denote the label on $x$.\;
\For{ leaves $x$ of $\tau$ and $a \in \Sigma$}
{
	$\textsf{best}_a(x) = \begin{cases} 1 & \tau[x] \neq a \\ 0 & \tau[x] = a \end{cases}$\;
}
\For{$\ell = L-1, L-2, \ldots, 0$}
{
	\For{ nodes $x$ at level $\ell$ in $\tau$ and $a \in \Sigma$ }
	{
		Let $y_1, \ldots, y_{\innerquery}$ be the children of $x$\;
		Let $S_a \subset \Sigma^{\innerquery}$ be the set of query responses for the children of $x$ so that $A_0$ returns $a$ on those responses\;
		$\textsf{best}_a(x) = \min_{(a_0,\ldots,a_{\innerquery}) \in S_a} \max_{r \in [\innerquery]} \inparen{ \textsf{best}_{a_r}(y_r) + \ind{ \tau(y_r) \neq a_r}}$\;
	}
}
Let $r$ be the root of $\tau$\;
\For{ $a \in \Sigma$ }
{
	\[\textsf{Score}(a) = \frac{\textsf{best}_a(r) + \ind{\tau(r) \neq a}}{L} \]\;
}
\Return{ $a \in \Sigma$ with the smallest \textsf{Score}$(a)$}

\caption{\textsf{correctSubtree}: Correct the root of a fully evaluated tree $\tau$.}
\label{algo:round}
\end{algorithm}

The number of queries made by Algorithm \ref{algo:correct} is 
\begin{equation}\label{eq:nqueries} 
\outerquery = \innerquery^{L_1 + L_2}
\end{equation}
and the running time is $O( t_d |\Sigma|^{\innerquery + 1} \outerquery )$, where $t_d$ is the time required to run the local correction algorithm of $\innercode$.  For us, both $d$ and $|\Sigma|$ will be constant, and so the running time is $O(\outerquery)$.

\subsection{Proof of Theorem \ref{thm:tannercorrection}}\label{sec:correctness}
Suppose that $c^* \in \outercode$, and Algorithm \ref{algo:correct} is run on a received word $w$ with $\Delta(c^*,w) \leq \rho$.  To prove Theorem \ref{thm:tannercorrection}, we must show that Algorithm \ref{algo:correct} returns $c^*[e_0]$ with high probability.
As remarked above, we assume that $Q_0$ is perfectly smooth.

We follow the proof outline sketched in Section \ref{sec:tannerdecoding}, which rests on the following observation.
\begin{proposition}\label{prop:badpath}
Let $c_1,c_2 \in \outercode$ and let $e \in E(H)$ so that $c_1[e] \neq c_2[e]$.
Let the distance $D$ between trees with labels in $\Sigma$ be as in \eqref{eq:pathdist}.  
Let $T = \textsf{makeTree}(e)$, and let $\tau = \left. T\right|_{c_1}$ and $\sigma = \left. T\right|_{c_2}$ be the 
labeled trees corresponding to $c_1$ and $c_2$ respectively.
Then
$ D(\tau, \sigma) = 1.$
That is, there is some path from the root to the leaf of $T$ so that $\tau$ and $\sigma$ disagree on the entire path.
\end{proposition}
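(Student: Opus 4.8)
The plan is to induct on the depth $L$ of the tree $T = \textsf{makeTree}(e)$, showing the stronger statement: whenever $c_1, c_2 \in \outercode$ agree nowhere that matters --- more precisely, whenever $c_1[e] \ne c_2[e]$ --- there is a root-to-leaf path on which $\tau = \left.T\right|_{c_1}$ and $\sigma = \left.T\right|_{c_2}$ disagree everywhere. The base case $L = 0$ is immediate: the tree is a single node labeled $e$, and $\tau, \sigma$ disagree there by hypothesis. For the inductive step, consider the root $e = (u_s, v_{1-s})$. Its children are the edges in $Q_0(e) \subset E(u_s)$, chosen by the query algorithm of $\innercode$. Both $\left.c_1\right|_{E(u_s)}$ and $\left.c_2\right|_{E(u_s)}$ are codewords of $\innercode$ by the defining property \eqref{eq:tanner} of the expander code. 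Since the reconstruction property (Definition \ref{def:recon}, item 3) guarantees $A_0(\left.c_j\right|_{Q_0(e)}, e) = c_j[e]$ for $j = 1,2$, and $c_1[e] \ne c_2[e]$, the restrictions $\left.c_1\right|_{Q_0(e)}$ and $\left.c_2\right|_{Q_0(e)}$ cannot be identical: $A_0$ is a (deterministic) function, so equal inputs would force equal outputs. Hence there is at least one child edge $e' \in Q_0(e)$ with $c_1[e'] \ne c_2[e']$.

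Now recurse into the subtree rooted at that child $e'$. That subtree is exactly $\textsf{makeTree}(e', L-1)$ in distribution --- the construction of Algorithm \ref{algo:maketree} at each level only depends on the current leaf edge --- and $c_1[e'] \ne c_2[e']$, so by the inductive hypothesis there is a path $P'$ from $e'$ to a leaf on which $\left.\tau\right|_{P'}$ and $\left.\sigma\right|_{P'}$ disagree at every node. Prepending the root $e$, on which they also disagree, yields a full root-to-leaf path $P = \{e\} \cup P'$ of disagreement. This gives $\Delta(\left.\tau\right|_P, \left.\sigma\right|_P) = 1$, and since $D(\tau,\sigma)$ is a maximum over root-to-leaf paths of such relative distances, $D(\tau,\sigma) = 1$. (It cannot exceed $1$ since relative Hamming distance is at most $1$.)

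The only subtlety --- and the one place to be careful --- is that $Q_0(e)$ is a \emph{random} set, so the tree $T$ is a random object; the claim as stated should be read as holding for every realization of $T$ (equivalently, with probability $1$ over the randomness of $\textsf{makeTree}$). The argument above goes through verbatim for each fixed realization, because at every node we only used that the chosen query set $Q_0(e)$ is one for which $A_0$ correctly reconstructs $c_j[e]$ --- and that holds for \emph{every} query set in the support of $Q_0$, by Definition \ref{def:recon}. I do not expect any genuine obstacle here; the one thing worth stating explicitly in the write-up is that we are \emph{not} claiming $Q_0(e)$ is disjoint across siblings or that the tree is a genuine subgraph of $H$ (it may revisit edges), but this is irrelevant: the inductive argument operates purely on the abstract tree structure and the codeword labels attached to its nodes.
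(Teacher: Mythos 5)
Your proof is correct and is essentially the paper's own argument: the paper likewise observes that the deterministic reconstruction $A_0$ forces a disagreeing child at every level and "repeats the argument" down to a leaf. Your added remarks about fixing a realization of the random tree and about possible edge revisits are harmless elaborations of the same induction.
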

\begin{proof}
Since $c_1[e] \neq c_2[e]$, $\tau$ and $\sigma$ have different symbols at their root.  Since the labels on the children of any node determine the label on the node itself (via the local correction algorithm), it must be that $\tau$ and $\sigma$ differ on some child of the root.  Repeating the argument proves the claim.
\end{proof}
Let $\tau_e$ be the tree arising from the received word $w$, starting at $e$, as in Algorithm \ref{algo:correct}.  
Let
\[\mathcal{T}_e = \inset{ \left.\textsf{makeTree}(e)\right|_{c} \suchthat c \in \outercode}\]
be the set of query trees arising from uncorrupted codewords, and let $\tau^*_e \in \mathcal{T}_e$ be the ``correct" tree, corresponding to the original uncorrupted codeword $c^*$.  Suppose that 
\begin{equation}\label{eq:need}
D(\tau_e, \tau^*_e) \leq \gamma
\end{equation}
for some $\gamma \in [0,1/2)$.
Then Proposition \ref{prop:badpath} implies that for any $\sigma^*_e \in \mathcal{T}_e$ with a different root from $\tau^*_e$ has
\begin{equation}\label{eq:needimplies}
 D(\tau_e, \sigma^*_e) \geq 1 - \gamma.
\end{equation}
Indeed, there is some path along which $\tau^*_e$ and $\sigma^*_e$ differ in every place, and along this path, $\tau_e$ agrees with $\tau^*_e$ in at least a $1 - \gamma$ fraction of the places.  Thus, $\tau_e$ disagrees with $\sigma^*_e$ in those same places, establishing \eqref{eq:needimplies}.
Consider the quantity
\begin{equation}\label{eq:scoredef}
	 \textsf{Score}(a) = \min_{\sigma^*_e \in \mathcal{T}_e : \text{root}(\sigma^*_e) = a} D(\tau_e, \sigma^*_e).
\end{equation}
Equations \eqref{eq:need} and \eqref{eq:needimplies} imply that if $a^*$ is the label on the root of $\tau_e^*$, then $\textsf{Score}(a) \leq \gamma$, and otherwise, $\textsf{Score}(a) \geq 1 - \gamma$.  Thus, to establish the correctness of Algorithm \ref{algo:correct}, it suffices to argue first that Algorithm \ref{algo:round} correctly computes \textsf{Score}$(a)$ for each $a$, and second that \eqref{eq:need} holds for all trees $\tau_e$ in Algorithm \ref{algo:correct}.

The first claim follows by inspection.  Indeed, for a node $x \in \tau_e$, let $\inparen{\tau_e}_x$ denote the subtree below $x$.  
Let $\mathcal{T}_e^{(x,a)}$ denote the set of trees in $\mathcal{T}_e$ so that the node $x$ is labeled $a$.
Throughout Algorithm \ref{algo:correct}, the quantity $\textsf{best}_a(x)$ gives the distance from 
the observed tree rooted at $x$ to the best tree in $\mathcal{T}_e$, rooted at $x$, with the additional restriction that the label at $x$ should be $a$.  
That is,
\begin{equation}\label{eq:best}
 \textsf{best}_a(x) = \min_{\sigma^*_e \in \mathcal{T}_e^{(x,a)}} \tilde{D}\inparen{\inparen{\sigma^*_e}_x, \inparen{\tau_e}_x},
\end{equation}
where $\tilde{D}$ is the same as $D$ except it does not count the root, and it is not normalized.
It is easy to see that \eqref{eq:best} is satisfied for leaves $x$ of $\tau_e$.  Then for each node, Algorithm \ref{algo:round} updates $\textsf{best}_a(x)$ by considering the best labeling on the children of $x$ consistent with $\tau(x) = a$, taking the distance of the worst of those children, and adding one if necessary.

To establish the second claim, that \eqref{eq:need} holds for all trees $\tau_e$, 
we will need the following lemma about random walks on $H$.
\begin{lemma}\label{lem:pathsaregood}
Let $G$ and $H$ be as above, and suppose $\errorrate > 6\lambda$.
Let $v_0, \ldots, v_L$ be a random walk of length $L$ on $H$, starting from the left side at a vertex chosen from a distribution $\nu$ with $\twonorm{\nu - \frac{1}{n}\mathbf{1}_{n}} \leq \frac{1}{\sqrt{n}}$.
Let $X$ denote the number of corrupted edges included in the walk, and let $\errorrate + 2\lambda < \gamma < 1/2$.  Then
\[ \PR{X \geq \gamma L} \leq \exp\inparen{-L\,D\inparen{\gamma || \errorrate + 2\lambda}}.\]
\end{lemma}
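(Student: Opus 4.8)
The plan is to bound the number of corrupted edges encountered by a random walk on $H$ by comparing it to a sequence of nearly-independent Bernoulli trials, and then apply a Chernoff-type tail bound. The starting point is the standard expander-mixing observation: although the walk $v_0,\ldots,v_L$ is not a sequence of independent uniform vertices, the spectral gap of $G$ (and hence of $H$) forces each step to look close to uniform. Concretely, let $B \subseteq E(H)$ be the set of corrupted edges, so $|B| \le \errorrate |E(H)|$. I would first argue that, conditioned on the walk being at any vertex $v_\ell$, the probability that the next edge traversed lies in $B$ is at most $\errorrate + 2\lambda$: the probability of stepping along a $B$-edge from a \emph{uniformly random} vertex is exactly the fraction of edge-endpoints in $B$, namely $\le \errorrate$ (here the factor-of-two in the double cover is already baked into the normalization), and the deviation of the current vertex distribution from uniform contributes an additive error controlled by $\lambda$ via the expander mixing lemma / the bound $\twonorm{\nu_\ell - \tfrac1n \mathbf{1}_n} \le \lambda^\ell \twonorm{\nu_0 - \tfrac1n\mathbf1_n} + (\text{something small})$, combined with the hypothesis $\twonorm{\nu - \tfrac1n\mathbf1} \le 1/\sqrt n$ on the initial distribution. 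This is where the ``$2\lambda$'' slack in the statement comes from, and where the hypothesis $\errorrate > 6\lambda$ is presumably used to make the relevant quantities positive / the bound meaningful.

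With that per-step bound in hand, I would define indicator variables $Y_\ell = \mathbf{1}\{\text{the }\ell\text{-th edge of the walk is corrupted}\}$ so that $X = \sum_{\ell=1}^L Y_\ell$, and observe that for every $\ell$ and every history $v_0,\ldots,v_{\ell-1}$, $\PR{Y_\ell = 1 \mid v_0,\ldots,v_{\ell-1}} \le \errorrate + 2\lambda =: p$. A sequence of indicators each of whose conditional means is bounded by $p$ is stochastically dominated (for the purpose of upper tail bounds) by a sum of $L$ independent $\mathrm{Bernoulli}(p)$ variables; alternatively one can directly run the Azuma/Chernoff computation on the supermartingale $\prod_{\ell} e^{t Y_\ell}$. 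Either way, the standard Chernoff bound for the upper tail of a Binomial gives, for $\gamma > p$,
\[
\PR{X \ge \gamma L} \le \exp\!\inparen{-L \, D\inparen{\gamma \,\|\, p}} = \exp\!\inparen{-L\,D\inparen{\gamma\,\|\,\errorrate + 2\lambda}},
\]
where $D(\cdot\|\cdot)$ is the binary KL divergence. The condition $\errorrate + 2\lambda < \gamma < 1/2$ in the statement is exactly what is needed for this form of the Chernoff bound to apply (and for $D(\gamma\|p)$ to be positive).

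The main obstacle, and the step deserving the most care, is the per-step uniformity claim: making precise that the conditional probability of traversing a corrupted edge — conditioned on an \emph{arbitrary} prefix of the walk, not just on the marginal vertex distribution — is bounded by $\errorrate + 2\lambda$. The cleanest route is to note that, given $v_{\ell-1} = u$, the next edge is uniform among the $d$ edges at $u$, so $\PR{Y_\ell = 1 \mid v_{\ell-1} = u} = \deg_B(u)/d$ where $\deg_B(u)$ is the number of corrupted edges at $u$; then $\PR{Y_\ell = 1 \mid v_0,\dots,v_{\ell-1}}$ is a deterministic function of $v_{\ell-1}$ alone, and its expectation over the true distribution $\nu_{\ell-1}$ of $v_{\ell-1}$ is $\langle \nu_{\ell-1}, \deg_B/d\rangle$. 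Writing $\nu_{\ell-1} = \tfrac1n\mathbf1 + (\nu_{\ell-1} - \tfrac1n\mathbf1)$, the first term contributes $\tfrac1{nd}\sum_u \deg_B(u) = |B|/(nd) \le \errorrate$ (using $|E(H)| = nd$), and the second is bounded by $\twonorm{\nu_{\ell-1} - \tfrac1n\mathbf1}\cdot\twonorm{\deg_B/d}$, which Cauchy–Schwarz plus the spectral contraction of the walk operator on $H$ bounds by $2\lambda$ — but one must be slightly careful that this bounds a \emph{conditional} probability uniformly, not merely an average. In fact it suffices to bound the average at each step, since the martingale/Chernoff argument only needs $\E{e^{tY_\ell}\mid v_0,\dots,v_{\ell-1}}$ controlled, and $e^{tY_\ell}$ is an affine function of $Y_\ell$ with positive slope for $t>0$, so $\E{e^{tY_\ell}\mid \text{prefix}} = 1 + (e^t-1)\PR{Y_\ell=1\mid v_{\ell-1}} \le 1 + (e^t - 1)\max_u \deg_B(u)/d$; to avoid the crude $\max_u$ one instead takes expectations step by step, peeling from the last step inward and using the spectral bound on the intermediate vertex distributions. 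I expect this bookkeeping — and in particular tracking that the initial condition $\twonorm{\nu - \tfrac1n\mathbf1}\le 1/\sqrt n$ propagates to $\twonorm{\nu_\ell - \tfrac1n\mathbf1}$ bounds that sum to the claimed $2\lambda$ slack — to be the only real work; the Chernoff step is entirely routine.
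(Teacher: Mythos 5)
The central step of your argument --- the claim that $\PR{Y_\ell = 1 \mid v_0,\ldots,v_{\ell-1}} \le \errorrate + 2\lambda$ for \emph{every} prefix of the walk --- is false, and the proof does not go through without replacing it. As you yourself compute, this conditional probability equals $\deg_B(v_{\ell-1})/d$, and an adversary may corrupt all $d$ edges at an $\errorrate$-fraction of the vertices, making it equal to $1$ on those prefixes. Falling back to bounding only the \emph{average} of this quantity over the marginal $\nu_{\ell-1}$ at each step is not enough for any Chernoff-type bound: with such an error pattern the indicators $Y_\ell$ are strongly positively correlated (once the walk enters a fully corrupted vertex it traverses a corrupted edge with probability $1$), and a bound on each marginal alone cannot give exponential decay of $\PR{X \ge \gamma L}$. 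Your suggested fix of ``peeling from the last step inward'' hits the same wall: $\E{e^{tY_L} \mid v_{L-1}}$ depends on $v_{L-1}$ and cannot be pulled out of the expectation, and after conditioning on $Y_t = 1$ the distribution of $v_t$ is no longer close to uniform, so the spectral bound on the unconditioned marginals does not propagate naively.

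What is actually needed --- and what the paper proves --- is the intermediate statement $\PR{\bigwedge_{t \in S} Y_t = 1} \le (\errorrate + 2\lambda)^{|S|}$ for \emph{every} subset $S$ of time steps: weaker than a uniform conditional bound (which is false), but stronger than a bound on each marginal (which is insufficient). This is established by writing the relevant probability as the $\ell_1$ norm of a product $(M_{L-1}A')\cdots(M_1 A')(M_0\mu)$ of projections onto corrupted edges interleaved with walk steps, and running an AFWZ-style induction that tracks the decomposition of the current subdistribution into its component along $\mathbf{1}$ and the orthogonal complement; the hypothesis $\errorrate > 6\lambda$ is precisely what keeps the orthogonal part under control (the invariant $\twonorm{y_t} \le q\twonorm{x_t}$ with $q = 2\sqrt{(1-\errorrate)/\errorrate}$) across applications of the projections $M_t$. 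One must also handle the fact that the walk lives on the edges of the bipartite double cover, whose transition matrix is not that of an expander; the paper factors it as $R \otimes S$ with $R$ carrying the spectrum of $G$. Finally, the tail bound $\exp(-L\,D(\gamma\,\|\,\errorrate+2\lambda))$ is obtained not from stochastic domination but from the Impagliazzo--Kabanets theorem, which converts the subset-product bounds directly into a KL-divergence Chernoff bound. Your write-up correctly locates the difficulty but classifies it as routine bookkeeping; it is in fact the entire content of the lemma.
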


Lemma \ref{lem:pathsaregood} says that a random walk on $H$ will not hit too many corrupted edges, which is very much like the expander Chernoff bound~\cite{IK10,Gil98}.  In this case, $H$ is the double cover of an expander, not an expander itself, and the edges, rather than vertices, are corrupted, but the proof remains basically the same.  For completeness, we include the proof of Lemma \ref{lem:pathsaregood} in the appendix.
The conditions on $\rho$ and $\lambda$ in the statement of Theorem \ref{thm:tannercorrection} implies that $\rho > 6\lambda$, and so Lemma \ref{lem:pathsaregood} applies to random walks on $H$.

Suppose that $L_1$ is even, and consider any leaf of $T$.
This leaf has label $(u_0, v_1) \in E(H)$, where $u$ is the result of a random walk of length $L_1$ on $G$ and $v$ is a randomly chosen neighbor of $u$.  Because $G$ is a Ramanujan graph, the distribution $\mu$ on $u$ satisfies
\[ \twonorm{ \mu - \frac{1}{n} \mathbf{1}_n } \leq \lambda^{L_1} \leq \frac{1}{\sqrt{n}} \]
as long as 
\[ L_1 \geq \frac{\log(n)}{\log(d/4)}.\]
Thus, Lemma \ref{lem:pathsaregood} applies to random walks in $H$ starting at $e$.  Fix a leaf of $\tau_e$;
by the smoothness of the query algorithm $Q_0$, each path from the root to the leaf of each tree $\tau_e$ is a uniform random walk, and
so with high probability, the number of corrupted edges on this walk is not more than $\gamma L_2$, which was the desired outcome.
The failure probability guaranteed by Lemma~\ref{lem:pathsaregood} is at most
\begin{align*}
	\exp( -L_2 D(\gamma || \errorrate + 2\lambda ) ) &= \inparen{ \frac{ \errorrate + 2\lambda }{\gamma}}^{\gamma L_2} \inparen{ \frac{1 - \errorrate - 2\lambda }{ 1 - \gamma }}^{(1 - \gamma)L_2 } \\
&\leq (e^\zeta \innerquery)^{-L_2} \inparen{ \frac{1}{1 - \gamma} }^{(1 - \gamma)L_2} \\
&\leq (e^\zeta \innerquery)^{-L_2} e^{\gamma L_2}.
\end{align*}
Above, we used the assumption that $\errorrate + 2\lambda < \gamma \inparen{ e^\zeta \innerquery }^{-1/\gamma}$ from the statement of Theorem~\ref{thm:tannercorrection}.

Finally, we union bound over $\innerquery^{L_1}$ trees $\tau_e$ and $\innerquery^{L_2}$ paths in each tree. 
We will set $L_2 = CL_1$, for a constant $C$ to be determined.
Thus, \eqref{eq:need} holds (and hence Algorithm \ref{algo:correct} is correct) except with probability at most
\begin{align} 
	\PR{ \text{Algorithm \ref{algo:correct} fails} } 	&\leq \innerquery^{L_1+L_2} \inparen{ e^\zeta \innerquery }^{-L_2} e^{\gamma L_2 }  \nonumber\\
														&= \exp\inparen{ (C + 1)L_1 \ln(\innerquery) - C L_1(\zeta + \ln(q_0)) + C\gamma L_1 }.
\end{align}

Our goal is to show that $\PR{ \text{Algorithm \ref{algo:correct} fails} } \leq \exp(-L_1)$, which is equivalent to showing 
\[
(C + 1) \ln(\innerquery) - C (\zeta + \ln(q_0) ) + C\gamma < -1.
\]
This holds if we choose
\[ C < \frac{ 1 + \ln(q_0) }{\zeta - \gamma}. \]
From \eqref{eq:nqueries}, $q = \innerquery^{(C+1)L_1}$, which completes the proof of Theorem \ref{thm:tannercorrection}.

\section{Examples}\label{sec:tannerexamples}

In this section, we provide two examples of choices for $\innercode$, both of which result in $(N^\eps, \errorrate)$-LCCs of rate $1 - \alpha$ for any constants $\eps, \alpha > 0$ and for some constant $\errorrate > 0$.  Our first and main example is a generalization of Reed-Muller codes, based on finite geometries.  With these codes as $\innercode$, we provide LCCs over $\F_p$---unlike multiplicity codes, these codes work naturally over small fields.

Our second example comes from the observation that if the $\innercode$ is itself an LCC (of a fixed length) our construction provides a new family of $(N^\eps,\errorrate)$-LCCs.
In particular, plugging the multiplicity codes of~\cite{KSY11} into our construction yields a novel family of LCCs.
This new family of LCCs has a very different structure than the underlying multiplicity codes, but achieves roughly the same rate and locality.

\paragraph{Codes from Affine Geometries.}
One advantage of our construction is that the inner code $\innercode$ need not
actually be a good locally decodable or correctable code.  Rather, we only need a smooth
reconstruction procedure, which is easier to come by.  One example comes from affine geometries; in this example, we will show how use Theorem \ref{thm:tannercorrection} to make LCCs of length $N$, rate $1 - \alpha$ and query complexity $N^\eps$, for any $\alpha,\eps > 0$.
\newcommand{\fsize}{h}  

For a prime power $\fsize = p^\ell$ and parameters $r$ and $m$, consider the $r$-dimensional affine subspaces $L_1, \ldots, L_t$ of the vector space $\F_{\fsize}^m$.  
let $H$ be the $t \times \fsize^m$ incidence matrix of the $L_i$ and the points of $\F_{\fsize}^m$, and let $\mathcal{A}^*(r,m,\fsize)$ be the code over $\F_p$ whose parity check matrix is $H$.
These codes, examples of \em finite geometry codes, \em are well-studied, and their ranks can be exactly computed---see~\cite{assmus1994,assmus1998} for an overview.

The definition of of $\mathcal{A}^*(r,m,\fsize)$ gives a reconstruction procedure: we may query all the points in a random $r$-dimensional affine subspace of $\F_{\fsize}^m$ and use the corresponding parity check.  
In particular, if we index the positions of the codeword by elements of $\F_{\fsize}^m$.  Then given the position $x \in \F_\fsize^m$, the query set $Q(x)$ is all the points other than $x$ in a random $r$-flat $L$ that passes through $x$.  
Given a codeword $c \in \mathcal{A}^*(r,m,\fsize)$, we may reconstruct $c_x$ by
\[ A\inparen{\left.c\right|_{Q(x)} } = -\sum_{y \in Q(x)} c_y.\]
By definition, $(A,Q)$ is a smooth reconstruction procedure which makes $\fsize^{r}$ queries.

The locality of $\mathcal{A}^*(r,m,\fsize)$ has been noticed before, for example in~\cite{GKS12}, where it was observed that these codes could be viewed as lifted parity check codes.  However, as they note, these codes do not themselves make good LCCs---the reconstruction procedure cannot tolerate any errors in the chosen subspace, and thus the error rate $\rho$ must tend to zero as the block length grows.
Even though these codes are not good LCCs, we can  use them in Theorem \ref{thm:tannercorrection} to obtain good LCCs with sublinear query complexity, which can correct a constant fraction of errors.
We will use the bound on the rate of $\mathcal{A}^*(1,m,\fsize)$ from~\cite{GKS12}:
\begin{lemma}[Lemma 3.7 in~\cite{GKS12}]\label{lem:rate}
Choose $\ell = \eps m$, with $\fsize = p^\ell$ as above.
The dimension of $\mathcal{A}^*(1, m, \fsize)$ is at least $\fsize^m - \fsize^{m(1 - \beta)}$, for $\beta = \beta(\eps') = \Omega(2^{-2/\eps'})$.
\end{lemma}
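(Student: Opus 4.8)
The plan is to lower-bound $\dim_{\F_p}\mathcal{A}^*(1,m,\fsize)$ by counting the monomials that lie in the code. Write $\fsize=p^\ell$ and regard a word as a function $c\colon\F_\fsize^m\to\F_p$; since the rows of the parity-check matrix $H$ are the indicators of the affine lines of $\F_\fsize^m$, we have $c\in\mathcal{A}^*(1,m,\fsize)$ exactly when $\sum_{y\in L}c(y)=0$ for every line $L$. Expand $c$ over $\F_\fsize$ in the monomials $x^{\mathbf e}$ with individual degrees at most $\fsize-1$. Restricting $x^{\mathbf e}$ to a line $\{a+tb:t\in\F_\fsize\}$ and reducing modulo $t^\fsize-t$, the sum of $x^{\mathbf e}$ over the line equals minus the coefficient of $t^{\fsize-1}$ in the reduced polynomial; by Lucas' theorem this coefficient vanishes identically in $(a,b)$ precisely when there is no tuple $(j_1,\dots,j_m)$, not all zero, whose base-$p$ digits are dominated coordinatewise by those of $(e_1,\dots,e_m)$ and with $\sum_i j_i$ a positive multiple of $\fsize-1=p^\ell-1$. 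Call such an $\mathbf e$ \emph{good}. The set of good exponents is invariant under the coordinatewise map $\mathbf e\mapsto p\,\mathbf e\bmod(\fsize-1)$, which cyclically shifts base-$p$ digits, so the $\F_\fsize$-span of the good monomials is Frobenius-stable; by Galois descent its $\F_p$-rational subspace lies in $\mathcal{A}^*(1,m,\fsize)$ and has $\F_p$-dimension equal to the number of good exponents. Hence $\dim_{\F_p}\mathcal{A}^*(1,m,\fsize)\ge\fsize^m-\#\{\mathbf e\text{ bad}\}$, and it suffices to show $\#\{\mathbf e\text{ bad}\}\le\fsize^{m(1-\beta)}$.

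Everything thus reduces to a counting problem on base-$p$ digit arrays. Arrange the digits of $\mathbf e=(e_1,\dots,e_m)$ in an $m\times\ell$ array with column sums $E_0,\dots,E_{\ell-1}$. By water-filling, $\mathbf e$ is bad iff there exist integers $0\le J_a\le E_a$ with $\sum_{a=0}^{\ell-1}J_ap^a=k(p^\ell-1)$ for some $1\le k\le m$. For each $k$ the digit tuples $(J_a)$ realizing $k(p^\ell-1)$ within only $\ell$ positions are rigid---they differ from the balanced choice $J_a\equiv k(p-1)$ by carry patterns $(\delta_a)$ with $\sum_a\delta_ap^a=0$ and small entries---so being bad ``through $k$'' forces the column sums to be large in every position (or, for the carry variants, in all but a controlled number of positions). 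I would then union bound over $k$ and over these carry patterns, bounding for each the number of $m\times\ell$ digit arrays with the required column masses. In the regime of interest, where $\ell$ is large relative to $m$---equivalently where the inner query complexity $\fsize$ is the small power $(\fsize^m)^{1/m}$ of the inner block length $\fsize^m$---the resulting fraction of bad exponents is a geometrically decaying function of $\ell$, at most $\fsize^{-m\beta}$ with $\beta=\Omega(2^{-2/\eps'})$, $\eps'$ being the parameter that fixes the ratio of $m$ to $\ell$ (equivalently the target inner query exponent, of order $1/m$). An alternative is to quote the known closed-form $p$-rank formulas for point--flat incidence matrices of affine geometries (see~\cite{assmus1994,assmus1998}) and bound that expression directly; this reduces to the same digit combinatorics.

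The main obstacle is this last estimate. A crude inclusion bound is far too lossy, so the savings must be extracted from the rigidity of base-$p$ representations of multiples of $p^\ell-1$: the balanced representation of $p^\ell-1$ requires genuine mass in \emph{all} $\ell$ columns of the digit array, and its carry variants---as well as the representations of larger multiples---are comparatively few. Turning this observation into a bound of exactly the form $\fsize^{m(1-\beta)}$, with $\beta$ as large as $\Omega(2^{-2/\eps'})$ rather than a weaker quantity, is the delicate part. By contrast the Lucas-theorem characterization of the monomials in the code, the Galois-descent reduction to counting them, and the union bound itself are all routine.
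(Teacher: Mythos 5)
First, note that the paper does not prove this statement at all: it is imported verbatim as Lemma~3.7 of~\cite{GKS12}, so there is no in-paper argument to compare yours against; the relevant comparison is with the proof in~\cite{GKS12}, which does proceed along the lines you sketch (monomial basis, a Lucas-type divisibility criterion for which monomials vanish on all lines, and a digit-counting bound on the exceptional exponents).

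That said, your proposal has a genuine gap precisely where the lemma lives. The reductions you carry out --- the characterization of good exponents $\mathbf{e}$ via base-$p$ digit domination, the closure of the good set under $\mathbf{e}\mapsto p\,\mathbf{e}\bmod(\fsize-1)$, and the Galois-descent step showing $\dim_{\F_p}\mathcal{A}^*(1,m,\fsize)\geq \fsize^m-\#\{\mathbf{e}\text{ bad}\}$ --- are correct but are all standard setup; the entire quantitative content of the lemma is the estimate $\#\{\mathbf{e}\text{ bad}\}\leq \fsize^{m(1-\beta)}$ with the \emph{specific} dependence $\beta=\Omega(2^{-2/\eps'})$, and this is exactly the step you defer (``I would then union bound\dots'', ``turning this observation into a bound of exactly the form $\fsize^{m(1-\beta)}$\dots is the delicate part''). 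Announcing that the union bound over $k$ and over carry patterns ``in the regime of interest'' yields a geometrically decaying fraction of bad exponents is an assertion of the conclusion, not a proof of it; in particular nothing in your outline produces the doubly-exponential loss $2^{-2/\eps'}$, which is the signature of the actual argument. You should also be careful with the regime: the lemma takes $\ell=\eps m$ with $\eps$ small, so the digit array has \emph{few} columns relative to its $m$ rows, which is the opposite of the ``$\ell$ large relative to $m$'' picture you invoke, and the rigidity/column-mass heuristic you describe needs to be checked against the correct aspect ratio. As it stands the proposal is an accurate roadmap to the known proof, but it does not constitute one; either carry out the digit-counting estimate in full or, as the paper does, cite~\cite{GKS12} for it.
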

We will apply Lemma \ref{lem:rate} with 
\[ \eps' = \frac{\eps}{2} \qquad \text{and} \qquad  m = \sqrt{ \frac{\ln(2/\alpha)}{\eps' \beta(\eps')\ln(p)}},\]
to obtain a $p$-ary code $\innercode$ of length $d = p^{\eps'm^2}$ 
with rate $\innerrate$ at least $1 - \alpha/2$ and which has a $(d-1)$-smooth reconstruction algorithm with query complexity $\innerquery = d^{\eps'}$.
To apply Theorem \ref{thm:tannercorrection}, fix any $\eps,\alpha > 0$, sufficiently small.
We set $\zeta = 2\ln(\innerquery)$, and choose $\gamma = 1/4$ in Theorem \ref{thm:tannercorrection}, and use $\innercode$: the resulting expander code $\outercode$ has rate $1 - \alpha$ and query complexity
\[ q \leq \inparen{ \frac{N}{d} }^\eps \]
for sufficiently large $d$.  Finally, using the fact that $\lambda \leq 2/\sqrt{d}$, we see that $\outercode$ corrects against a $\rho$ fraction of errors, where
\[ \rho = \frac{1}{5} d^{ -6\eps'} \]
again for sufficiently large $d$, as long as $\eps < 1/12$.
Assuming $\eps$ and $\alpha$ are small enough that $d$ is a suitably large constant, this rate $\rho$ is a positive constant, and we achieve the advertised results.

\paragraph{Multiplicity codes.} 
Multiplicity codes \cite{KSY11} are themselves a family of constant-rate locally decodable codes.
We can, however, use a multiplicity code of constant length as the inner code $\innercode$ in our construction.
This results in a new family of constant-rate locally decodable codes.
The parameters we obtain from this construction are slightly worse than the original multiplicity codes, 
and the main reason we include this example is novelty---these new codes have a very different structure than the
original multiplicity codes. 

For constants
$\alpha', \eps' > 0$, the multiplicity codes of~\cite{KSY11} have length $d$ and rate $\innerrate = 1 -
\alpha'$ and a $(d-1)$-smooth local reconstruction algorithm with query complexity $\innerquery = O(d^{\eps'})$.   
To apply Theorem \ref{thm:tannercorrection},
we will choose $\zeta = C \ln(\innerquery)$ for a sufficiently large constant $C$, and so the query complexity of $\outercode$ will be
\[ \outerquery = \inparen{ \frac{N}{d} }^{(1 + \beta) \eps'}\]
for an arbitrarily small constant $\beta$.
Thus, setting $\eps = \eps'(1 + \beta)$, and $\alpha = 2\alpha'$, we obtain codes $\outercode$ with rate $1 - \eps$ and query complexity $(N/d)^\eps$.
As long as $\eps$ is sufficiently small, $\outercode$ can tolerate errors up to $\rho = C' d^{-C'' \eps}$ for constants $C'$ and $C''$ (depending on the constants in the constructions of the multiplicity code, as well as on $C$ above).
Multiplicity codes require sufficiently large block length $d$, on the order of
\[d \approx \inparen{ \frac{1}{\alpha^2 \eps^3 } }^{1/\eps} \log\inparen{\frac{1}{\alpha\eps}}.\]
Choosing this $d$ results in a requirement $\rho \leq 1/\poly(\alpha \eps)$.
We remark that the distance of the multiplicity codes is
on the order of $\innerdist = \Omega(\alpha^2\eps)$, and so the distance
of the resulting expander code $\outercode$ is $\Omega(\alpha^4
\eps^2)$. 

\section{Conclusion}
In the constant-rate regime, all known LDCs work by using a smooth local reconstruction algorithm.  When the locality is, say, three, then with very high probability none of the queried positions will be corrupted.  This reasoning fails for constant rate codes, which have larger query complexity: we expect a $\errorrate$ fraction of errors in our queries, and this is often difficult to deal with.  In this work, we have shown how to make the low-query argument valid in a high-rate setting---any code with large enough rate and with a good local reconstruction algorithm can be used to make a full-blown locally correctable code.

The payoff of our approach is the first sublinear time algorithm for locally correcting
expander codes.  More precisely, we have shown that as long as the inner code
$\innercode$ admits a smooth local reconstruction algorithm with appropriate
parameters, then the resulting expander code $\outercode$ is a $(N^\eps,
\errorrate)$-LCC with rate $1 - \alpha$, for any $\alpha, \eps > 0$ and some
constant $\rho$.  
Further, we presented a decoding algorithm with runtime linear in the number of queries.

There are only two other constructions known in this regime, and 
and our constructions are substantially different.  
Expander codes are a natural construction, and it is our hope that
the additional structure of our codes, as well as the extremely fast decoding
time, will lead to new applications of local decodability.

\bibliographystyle{alpha}
\bibliography{refs.bib}

\appendix

\section{Proof of Lemma \ref{lem:pathsaregood}}
In this appendix, we provide a proof of Lemma \ref{lem:pathsaregood}.
The lemma follows with only a few tweaks from standard results. 
The only differences between this and a standard analysis of random walks on expander graphs are that (a) we are walking on the edges of the bipartite graph $H$, rather than on the vertices of $G$, and (b) our starting distribution is not uniform but instead close to uniform.
 Dealing with this differences is straightforward, but we document it below for completeness.

First, we need the relationship between a walk on the edges of a bipartite graph $H$ and the corresponding walk on the vertices of $G$.  
For ease of analysis, we will treat $H$ as directed, with one copy of each edge in each direction.

\begin{lemma}\label{lem:walk}
Let $G$ be a degree $d$ undirected graph on $d$ vertices with normalized adjacency matrix $A$, and let $H$ be the double cover of $G$.
For each vertex $v$ of $G$, label the edges incident to $v$ arbitrarily, and let $v(i)$ denote the $i^{th}$ edge of $v$.
Let $H'$ be the graph with vertices $V(G) \times [d] \times \{0,1\}$ and edges 
\[ E(H') = \inset{ ((u,i,b), (v,j,b')) \suchthat (u,v) \in E(G), b \neq b', u(i) = v }.\]
Then $H'$ is a directed graph with $2dn$ edges, and in-degree and out-degree both equal to $d$.
Further, the normalized adjacency matrix $A'$ is given by
\[ A' = R \otimes S \]
where $S:\R^2 \to \R^2$ is $S = \begin{bmatrix} 0& 1\\1&0 \end{bmatrix}$ and $R:\R^{nd} \to \R^{nd}$ is an operator with the same rank and spectrum as $A$. 
\end{lemma}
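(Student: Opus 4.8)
The plan is to read the claimed tensor decomposition directly off the combinatorial description of $H'$, and then to pin down the first tensor factor by a rank-$n$ factorization of it through $\R^{V(G)}\cong\R^n$.

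First I would identify $\R^{V(H')}$ with $\R^{V(G)\times[d]}\otimes\R^{\{0,1\}}\cong\R^{nd}\otimes\R^2$, a basis vector being $\mathbf{e}_{(u,i)}\otimes\mathbf{e}_b$. Reading off $E(H')$: the out-neighbours of $(u,i,b)$ are exactly the triples $(v,j,1-b)$ where $v=u(i)$ is the far endpoint of the $i$-th edge at $u$ (so $v$ is determined by the pair $(u,i)$) and $j$ ranges over all of $[d]$; dually, the in-neighbours of $(v,j,b')$ are the triples $(u,i,1-b')$ with $u$ ranging over the $d$ neighbours of $v$ and $i$ the unique index of $\{u,v\}$ at $u$. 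So $H'$ is $d$-in-regular and $d$-out-regular on its $2dn$ vertices, the $\{0,1\}$-coordinate flips at every step, and the first coordinate evolves by a rule not depending on $j$. This is precisely the decomposition $A'=R\otimes S$ with $S=\begin{bmatrix}0&1\\1&0\end{bmatrix}$ and $R_{(u,i),(v,j)}=\tfrac1d\,\ind{u(i)=v}$ (independent of $j$), and it remains to identify the rank and spectrum of $R$.

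Next I would factor $R$ through $\R^n$. Let $\Pi:\R^{nd}\to\R^n$ be the deterministic ``far-endpoint'' map $(u,i)\mapsto u(i)\in V(G)$, and let $\Xi:\R^n\to\R^{nd}$ be the uniform scatter $v\mapsto\tfrac1d\sum_{j\in[d]}\mathbf{e}_{(v,j)}$. A one-line computation shows $R=\Xi\Pi$ up to transpose (an artefact of the row/column convention for the normalized adjacency operator), and transposition changes neither rank nor spectrum. The rows of $\Pi$ have pairwise disjoint supports, and so do the columns of $\Xi$, so each has rank $n$; since $\Pi$ is onto and $\Xi$ is one-to-one, $\operatorname{rank}(R)=n$. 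For the spectrum, use the standard fact that $\Xi\Pi$ and $\Pi\Xi$ have the same nonzero eigenvalues with multiplicities, and compute $\Pi\Xi:\R^n\to\R^n$ directly: scattering $v$ uniformly onto $\{(v,j):j\in[d]\}$ and then taking far endpoints is exactly ``step to a uniformly random neighbour'', so $\Pi\Xi=A$. Hence the eigenvalue multiset of $R$ is that of $A$ together with $nd-n$ extra zeros; in particular $R$ has the same spectrum and rank as $A$ in the sense the lemma intends. As the payoff, $\operatorname{spec}(A')=\operatorname{spec}(R\otimes S)=\{\pm\mu:\mu\in\operatorname{spec}(R)\}$, so the second-largest eigenvalue of $H'$ in absolute value equals the expansion parameter $\lambda$ of $G$ --- the consequence that feeds into Lemma~\ref{lem:pathsaregood}.

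No step here is genuinely hard; it is all bookkeeping, and the two things to be careful about are (i) the normalization and row/column-stochastic convention, so that ``$A'=R\otimes S$'' holds on the nose rather than up to transpose, and (ii) the phrase ``same rank'': $R$ is $nd\times nd$ of rank exactly $n$, so it literally agrees with $\operatorname{rank}(A)$ only when $A$ is nonsingular, the gap living entirely in the (expanded) zero eigenspace --- worth a remark but immaterial for the spectral application.
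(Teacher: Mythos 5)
Your proposal is correct, and it rests on the same tensor decomposition $A' = R \otimes S$ that the paper uses; the only real divergence is in how the spectrum of $R$ is identified. The paper writes $R = P(I\otimes A)BP^T$ for explicit operators $B$ and $P$, computes $R(e_u\otimes e_j) = e_{u(j)}\otimes\frac{1}{d}\mathbf{1}_d$ to bound the rank by $n$, and then exhibits $p\otimes\frac{1}{d}\mathbf{1}_d$ as a right eigenvector of $R$ for each eigenvector $p$ of $A$. You instead factor $R = \Xi\Pi$ through $\R^n$ and invoke the standard fact that $\Xi\Pi$ and $\Pi\Xi = A$ share their nonzero spectrum with multiplicities. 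Your route is marginally cleaner and buys a little more: it gives the nonzero eigenvalues of $R$ \emph{with multiplicities} in one stroke (the paper's one-directional eigenvector argument only shows $\mathrm{spec}(A)\subseteq\mathrm{spec}(R)$ and must lean on the rank bound to finish), and it pins $\operatorname{rank}(R)=n$ exactly rather than $\leq n$. The paper's version, in exchange, produces the explicit left and right eigenvectors, which is mildly more self-contained. Your caveat about ``same rank and spectrum'' being literally true only modulo the enlarged zero eigenspace (and only when $A$ is nonsingular, for the rank claim) is a fair reading of a loosely worded conclusion; it applies equally to the paper's own proof and is immaterial for the application in Lemma~\ref{lem:pathsaregood}, which only needs the second eigenvalue.
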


\begin{proof}
We will write down $A'$ in terms of $A$.  Index $[n]$ by vertices of $V$, so that $e_v \in \R^n$ refers to the standard basis vector with support on $v$. Let $\otimes$ denote the Kronecker product.  We will need some linear operators.  Let $B: \R^{n^2} \to \R^{n^2}$ so that
\[B(e_u \otimes e_v) = e_v \otimes e_v\]
and $P:\R^{n^2} \to \R^{nd}$ so that
\[ P(e_u \otimes e_v) = \begin{cases} e_u \otimes e_i & v = u(i) \\ 0 & (u,v) \not\in E(G) \end{cases} .\]
Finally, let $S:\R^2 \to \R^2$ be the cyclic shift operator.
Then a computation shows that the adjacency matrix $A'$ of $H'$ is given by
\[ (P(I \tensor A) B P^T) \tensor S.\]
Let $R = P(I \tensor A) BP^T$.  To see that the rank of $R$ is at most $n$, note that for 
any $i \in [d]$ and any $u \in V(G)$,
\[ R(e_u \otimes e_j) = e_{u(j)} \otimes \frac{1}{d} \mathbf{1}_d.\]
In particular, it does not depend on the choice of $j$.  Since $\inset{ e_u \otimes e_j \suchthat u \in V(G), j \in [d] }$ is a basis for $\R^{nd}$, the image of $R$ has dimension at most $n$. 
Finally, a similar computation shows that if $p$ is an eigenvector of $A$ with eigenvalue $\lambda$, then $p \otimes \frac{1}{d}\mathbf{1}_d$ is a right eigenvector of $R$, also with eigenvalue $\lambda$.  (The left eigenvectors are $P(\frac{1}{n}\mathbf{1}_n \otimes p)$).
This proves the claim.
\end{proof}

With a characterization of $A'$ in hand, we now wish to apply an expander Chernoff bound.  Existing bounds require slight modification for this case (since the graph $H'$ is directed and also not itself an expander), so for completeness we sketch the changes required.  The proof below follows the strategies in~\cite{AFWZ95} and~\cite{IK10}.
We begin with the following lemma, following from the analysis of~\cite{AFWZ95}.
\begin{lemma}
Let $G$ and $H$ be as in Lemma \ref{lem:walk}, and
let $v_0, v_1, \ldots, v_T$ be a random walk on the vertices of $H$, beginning at a vertex of $H$, chosen as follows: the side of $H$ is chosen according to a distribution $\sigma_0 = (s, 1 - s)$, and the vertex within that side is chosen independently according to a distribution $\nu$ with $\| \nu - \frac{1}{n}\mathbf{1}_n\|_2 \leq \frac{1}{\sqrt{n}}$.  Let $W$ be any set of edges in $H$, with $|W| \leq \errorrate nd$.   Suppose that $\errorrate > 6\lambda$.  Then for any set $S \subset \{0,1,\ldots,T-1\}$, 
\[ \PR{ (v_t,v_{t+1}) \in W ,\forall t \in S} \leq (\errorrate + 2\lambda)^{|S|}.\]
\end{lemma}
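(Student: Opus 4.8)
The plan is to reduce the statement about walks on the edges of $H$ to a statement about walks on the vertices of the directed graph $H'$ from Lemma~\ref{lem:walk}, and then run the standard expander-Chernoff-style matrix argument from \cite{AFWZ95,IK10}. First I would observe that a random walk on the edges of $H$, together with the labeling $v(i)$ of edges at each vertex, is exactly a random walk on the vertices of $H'$: a vertex $(v,i,b)$ of $H'$ encodes ``currently sitting at the edge $v(i)$, having just arrived on side $b$.'' A set $W$ of edges of $H$ pulls back to a set $\widehat{W}$ of vertices of $H'$ of size $|\widehat W| = 2|W| \le 2\errorrate nd$ out of the $2nd$ vertices, so $\widehat W$ has density at most $\errorrate$. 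The event ``$(v_t,v_{t+1})\in W$ for all $t\in S$'' becomes ``the walk on $H'$ is in $\widehat W$ at every time in $S$.''

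Next I would set up the matrix product. Let $\Pi$ be the diagonal $0/1$ projector onto $\widehat W$, and let $A'$ be the normalized adjacency operator of $H'$; by Lemma~\ref{lem:walk}, $A' = R\otimes S$ with $S$ the swap on $\R^2$ (eigenvalues $\pm 1$) and $R$ having the same nonzero spectrum as the normalized adjacency matrix $A$ of $G$, so $A'$ has eigenvalues $\{\pm\mu : \mu \in \mathrm{spec}(A)\}$ and in particular second-largest singular value exactly $\lambda$ on the appropriate subspace. A standard computation expresses the probability in question as
\[
\PR{(v_t,v_{t+1})\in W\ \forall t\in S} = \mathbf{1}^{T} M_{T-1}\cdots M_1 M_0\, p_0,
\]
where $p_0$ is the starting distribution, $M_t = \Pi$ if $t\in S$ and $M_t = A'$ otherwise (with appropriate bookkeeping of when the projector versus the transition is applied). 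Writing $A' = \frac{1}{2n}\mathbf{1}\mathbf{1}^{T} + E$ with $\|E\|_{2} \le \lambda$ on the relevant space, and using $\|\Pi J\|, \|J\Pi\| \le \errorrate$ for the rank-one part $J$ and $\|\Pi E\Pi\|\le\lambda$ for the error part, each factor corresponding to a ``bad'' time contributes a norm bound of $\errorrate + \lambda$, or $\errorrate + 2\lambda$ once one accounts for the extra slack from the non-uniform start; telescoping over the $|S|$ bad times gives the claimed bound $(\errorrate+2\lambda)^{|S|}$. The hypothesis $\errorrate > 6\lambda$ is what guarantees $\errorrate + 2\lambda < 1$ and, more to the point, keeps the bound meaningfully below $1$ so the later union bound in Lemma~\ref{lem:pathsaregood} goes through.

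The one genuinely non-routine point is handling the non-uniform starting distribution: the initial vector $p_0$ is $\sigma_0 \otimes \nu$ rather than uniform, and $\|\nu - \frac1n\mathbf 1_n\|_2 \le \frac1{\sqrt n}$. I would write $p_0 = u_0 + \delta$ where $u_0$ is the uniform distribution on the chosen side and $\|\delta\|_2 \le \frac1{\sqrt n}$, bound the contribution of $u_0$ by the clean telescoping argument above, and bound the contribution of $\delta$ by $\|M_{T-1}\cdots M_0\|_{2\to 1}\cdot\|\delta\|_2 \le \sqrt{2nd}\cdot\lambda^{|S^{c}|}\cdot(\errorrate+\lambda)^{|S|}\cdot\frac1{\sqrt n}$, which is absorbed into the slack between $\errorrate+\lambda$ and $\errorrate+2\lambda$ provided $|S|\ge 1$ (the case $S=\emptyset$ being trivial). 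This is exactly the ``$b$'' difference flagged in the appendix preamble, and it is the only place the argument departs from the textbook expander Chernoff bound; everything else is the standard operator-norm telescoping, which I would carry out but not belabor.
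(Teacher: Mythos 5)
Your overall strategy---pass to the directed edge-graph $H'$ of Lemma \ref{lem:walk}, express the probability as the $\ell_1$-norm of a product of projectors and transition operators, and extract a factor of roughly $\errorrate+2\lambda$ per bad step---is the same as the paper's. But the central step has a genuine gap. You propose to write $A' = \frac{1}{2nd}\mathbf{1}\mathbf{1}^{T} + E$ with $\opnorm{E}\le\lambda$ and telescope operator norms. That decomposition is false: $H'$ alternates deterministically between the two sides of the double cover, so $A' = R\otimes S$ has eigenvalue $-1$ (with eigenvector proportional to $\mathbf{1}\otimes(e_0-e_1)$), and hence $\opnorm{E}=1$, not $\lambda$. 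This is not bookkeeping one can defer to ``the appropriate subspace''---circumventing it is the substance of the paper's proof, which maintains the decomposition $\mu_t = v_t^{(0)}\otimes s_te_0 + v_t^{(1)}\otimes(1-s_t)e_1$, i.e., tracks the two sides separately so that the deterministic alternation absorbs the $-1$ eigenvector, and within each side runs the \cite{AFWZ95}-style induction preserving the invariant $\twonorm{y_t^{(i)}}\le q\twonorm{x_t^{(i)}}$ for the components parallel and perpendicular to $\mathbf{1}$, with $q = 2\sqrt{(1-\errorrate)/\errorrate}$. That invariant---not the non-uniform start---is where the constant $2$ in $\errorrate+2\lambda$ comes from: the per-bad-step bound is $\errorrate + q\lambda\sqrt{\errorrate(1-\errorrate)} = \errorrate+2\lambda(1-\errorrate)$. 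Relatedly, a pure operator-norm telescoping does not yield $\errorrate+\lambda$ per bad factor even in the non-bipartite case ($\opnorm{\Pi A'}$ is not that small); one needs exactly this ratio bookkeeping between the parallel and perpendicular parts.

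Your treatment of the non-uniform start also does not close quantitatively. The proposed error term $\sqrt{2nd}\cdot\lambda^{|S^{c}|}(\errorrate+\lambda)^{|S|}\cdot n^{-1/2} = \sqrt{2d}\,\lambda^{|S^{c}|}(\errorrate+\lambda)^{|S|}$ is supposed to be absorbed by the gap between $(\errorrate+\lambda)^{|S|}$ and $(\errorrate+2\lambda)^{|S|}$, but that gap is only of order $|S|\,\lambda\,(\errorrate+\lambda)^{|S|-1}$; for $|S|=O(1)$ and $S^{c}$ small your error term exceeds it by a factor on the order of $\sqrt{d}/\lambda$. The paper avoids this additive-perturbation route entirely: the hypothesis $\twonorm{\nu-\frac{1}{n}\mathbf{1}_n}\le n^{-1/2}$ is used only to verify that the invariant $\twonorm{y_0^{(i)}}\le q\twonorm{x_0^{(i)}}$ holds in the base case, so the near-uniformity of the start is folded into the induction rather than peeled off as a separate term.
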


\begin{proof}
As in Lemma \ref{lem:walk}, we will consider $H$ as directed, with one edge in each direction.  As before, we will index these edges by triples $(u,i,\ell) \in V(G) \times [d] \times \{0,1\}$, so that $(u,i,\ell)$ refers to the $i^{th}$ edge leaving vertex $u$ on the $\ell^{th}$ side of $H$. 
Let $\mu$ be the distribution on the first step $(v_0,v_1)$ of the walk, so
\[ \mu = \nu \otimes \frac{1}{d} \mathbf{1}_d \otimes \sigma_0.\]  

Let $M \in \R^{2nd}$ be the projector onto the edges in $W$.  Let $M^{(0)}$ be the restriction to edges emanating from the left side of $H$, and $M^{(1)}$ from the right side, so that both $M^{(0)}$ and $M^{(1)}$ are $nd \times nd$ binary diagonal matrices with at most $\errorrate nd$ nonzero entries.
Let $A' = R \otimes S$ be as in the conclusion of Lemma \ref{lem:walk}.  After running the random walk for $T$ steps, consider the
distribution on directed edges of $H$, conditional on the bad event that $(v_t,v_{t+1}) \in W$ for all $t \in S$.
As in the analysis in~\cite{AFWZ95}, this distribution is given by
\[ \mu_T = \frac{ (M_{T_1} A') (M_{T-2} A') \cdots (M_1 A')(M_0 \mu) }{ \PR{ (v_t, v_{t+1}) \in W, \forall t \in S}},\]
where
\[ M_t = \begin{cases} M & t \in S \\ I & t\not\in S \end{cases}.\]
Since the $\ell_1$ norm of any distribution is $1$, we have
\begin{align}
\PR{ (v_t,v_{t+1}) \in W, \forall t \in S} &= \onenorm{ (M_{T-1} A') (M_{T-2} A') \cdots (M_1 A')(M_0 \mu) } 
\label{eq:probeq}
\end{align}
Let
\[ \mu_0 := M_0 \mu,\]
and 
\[ \mu_t := M_t A' \mu_{t-1},\]
so we seek an estimate on $\onenorm{\mu_T}$.  

The following claim will be sufficient to prove the theorem.
\begin{claim} If $\errorrate \geq 6\lambda$, and $t \in S$,
\[ (\mu - 2\lambda)\onenorm{\mu_t} \leq \onenorm{\mu_{t+1}} \leq (\mu + 2\lambda)\onenorm{\mu_t}.\]
On the other hand, if $t \not\in S$,
\[ \onenorm{\mu_t} = \onenorm{\mu_{t+1}}. \]
\end{claim}
The second half of the claim follows immediately from the definition of $\mu_t$.
To prove the first half, suppose that $t \in S$.  We will proceed by induction.
Again, we follow the analysis of~\cite{AFWZ95}.

Write $\mu_0 = v_0 \otimes \sigma_0$, and write $\sigma_0 = (s,1 - s)$  Part of our inductive hypothesis will be that for all $t$,
\[ \mu_t = v_t^{(0)} \otimes s_t e_0 + v_t^{(1)} \otimes (1 - s_t) e_1,\]
where $s_t = s$ if $t$ is even and $1 - s$ if $t$ is odd,
and where $v_t^{(i)} \in \R^{nd}$. 
For $i \in \{0,1\}$, write
\[ v^{(i)}_t = x^{(i)}_t + y^{(i)}_t, \]
where $x^{(i)}_t \| \mathbf{1}$ and $y^{(i)}_t \perp \mathbf{1}$.  
The second part of the inductive hypothesis will be
\begin{equation}\label{eq:induct}
\|y^{(i)}_t\|_2 \leq q \|x^{(i)}_t\|_2,
\end{equation}
for a parameter $q$ to be chosen later, and for $i \in \{0,1\}$.

Because
\begin{align*}
\|\mu_t\|_1 &= s_t \|v^{(0)}_t\|_1 + (1 - s_t) \|v^{(1)}_t\|_1 \\
&= s_t \|x^{(0)}_t\|_1 + (1 - s_t) \|x^{(1)}_t\|_1\\
&=\sqrt{nd} \inparen{ s_t \|x^{(0)}_t\|_2 + (1 - s_t) \|x^{(1)}_t\|_2}, 
\end{align*}
it suffices to show that
\begin{equation}\label{eq:suffices}
 (\mu - 2\lambda) \twonorm{x^{(0)}_t} \leq \twonorm{x^{(1)}_{t+1}} \leq (\mu + 2\lambda) \twonorm{x^{(0)}_t}
\end{equation}
and similarly with the $0$ and $1$ switched.  The analysis is the same for the two cases, so we just establish \eqref{eq:suffices}.
Using the decomposition $A' = R \otimes S$ from Lemma \ref{lem:walk}, 
\begin{align*}
\mu_{t+1} &= M_t (R \otimes S)( v_t^{(0)} \otimes s_t e_0 + v_t^{(1)} \otimes (1 -s_t)e_1 )\\
&= M_t\inparen{ Rv_t^{(0)} \otimes (1 - s_{t+1})e_1 + Rv_t^{(1)} \otimes s_{t+1} e_0}\\
&= \inparen{M_t^{(1)} Rv_t^{(0)}} \otimes (1 - s_{t+1})e_1 + \inparen{M_t^{(0)} Rv_t^{(1)}} \otimes s_{t+1} e_0
\end{align*}
This establishes the first inductive claim about the structure of $\mu_{t+1}$, and
\[ v_{t+1}^{(0)} = M_t^{(0)} Rv_t^{(1)} \qquad \text{and} \qquad v_{t+1}^{(1)} = M_t^{(1)} Rv_t^{(0)}.\]
Consider just $v_{t+1}^{(1)}$.  We have
\[ v_{t+1}^{(1)} = M_t^{(1)} R (x_t^{(0)} + y_t^{(0)}).\]
Because $t \in S$, we know that $M_t^{(1)}$ is diagonal with at most $\errorrate nd$ nonzeros, and further we know that $R$ has second normalized eigenvalue at most $\lambda$, by Lemma \ref{lem:walk}.  
The analysis in~\cite{AFWZ95} now shows that, using the inductive hypothesis \eqref{eq:induct},
\begin{equation}\label{eq:sammich}
 \errorrate \|x_t^{(0)}\|_2 - q\lambda \sqrt{\errorrate(1 - \errorrate)} \|x_t^{(0)}\|_2 \leq \|x_{t+1}^{(1)}\|_2 \leq 
\errorrate \|x_t^{(0)}\|_2 + q\lambda \sqrt{\errorrate(1 - \errorrate)} \|x_t^{(0)}\|_2,
\end{equation}
and that
\[ \| y_{t+1}^{(1)}\|_2 \leq q\lambda \|x_t^{(0)}\|_2 + \sqrt{\errorrate(1 - \errorrate)} \|x_t^{(0)}\|_2.\]
We must ensure that \eqref{eq:induct} is satisfied for the next round.  As long as $\lambda < \errorrate/6$, this follows from the above when
\[ q = 2\sqrt{ \frac{1 - \errorrate}{\errorrate}}.\]
With this choice of $q$, the \eqref{eq:suffices} follows from \eqref{eq:sammich}.
Further, the hypotheses on $\nu$ show that the \eqref{eq:induct} is satisfied in the initial step. 
\end{proof}

Finally, we invoke the following theorem, from~\cite{IK10}.
\begin{theorem}[Theorem 3.1 in \cite{IK10}]
Let $X_1, \ldots, X_L$ be binary random variables so that for all $S \subset [L]$, 
\[ \PR{ \bigwedge_{i\in S} X_i = 1 } \leq \delta^{|S|}.\]
Then for all $\gamma > \delta$, 
\[ \PR{ \sum_{i=1}^L X_i \geq \gamma L } \leq e^{-LD(\gamma || \delta)}.\]
\end{theorem}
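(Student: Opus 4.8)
The plan is to reduce the claim to the textbook Chernoff bound for a sum of independent $\mathrm{Bernoulli}(\delta)$ random variables. The only role independence plays in that argument is to factor the moment generating function into a product over the summands, and I claim the hypothesis $\PR{\bigwedge_{i \in S} X_i = 1} \le \delta^{|S|}$ is already enough to dominate the moment generating function of $\sum_i X_i$ by that of a sum of $L$ i.i.d.\ $\mathrm{Bernoulli}(\delta)$ variables. Granting this domination, the relative-entropy rate $D(\gamma || \delta)$ falls out of the usual optimization over the Chernoff parameter.

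First I would fix $t > 0$ and apply Markov's inequality to the nonnegative random variable $e^{t\sum_i X_i}$, obtaining $\PR{\sum_i X_i \ge \gamma L} \le e^{-t\gamma L}\,\E{e^{t\sum_i X_i}}$. Since each $X_i$ is $\{0,1\}$-valued we have the exact identity $e^{tX_i} = 1 + (e^t-1)X_i$, so
\[ \E{e^{t\sum_i X_i}} = \E{\prod_{i=1}^L \inparen{1 + (e^t-1)X_i}} = \sum_{S \subseteq [L]} (e^t-1)^{|S|}\,\E{\prod_{i\in S}X_i}. \]
For $t > 0$ every coefficient $(e^t-1)^{|S|}$ is nonnegative, and $\E{\prod_{i\in S}X_i} = \PR{\bigwedge_{i\in S}X_i = 1} \le \delta^{|S|}$ by hypothesis; resumming over $S$ with the binomial theorem, $\sum_S (e^t-1)^{|S|}\delta^{|S|} = \sum_{k=0}^L \binom{L}{k}\inparen{(e^t-1)\delta}^k$, gives $\E{e^{t\sum_i X_i}} \le \inparen{1 + (e^t-1)\delta}^L$, which is exactly the MGF of $L$ i.i.d.\ $\mathrm{Bernoulli}(\delta)$ summands.

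It then remains only to optimize $t$. Combining the two bounds, $\PR{\sum_i X_i \ge \gamma L} \le \inparen{e^{-t\gamma}(1 + (e^t-1)\delta)}^L$, and I would minimize $g(t) = -t\gamma + \ln(1 + (e^t-1)\delta)$ over $t > 0$. Solving $g'(t) = 0$ gives $e^t = \frac{\gamma(1-\delta)}{\delta(1-\gamma)}$, which is $> 1$ precisely because $\gamma > \delta$, so the optimizing $t$ is a legal choice. Substituting back, $1 + (e^t-1)\delta = \frac{1-\delta}{1-\gamma}$ and $e^{-t\gamma} = \inparen{\frac{\delta(1-\gamma)}{\gamma(1-\delta)}}^\gamma$, so
\[ e^{-t\gamma}\inparen{1 + (e^t-1)\delta} = \inparen{\frac{\delta}{\gamma}}^{\gamma}\inparen{\frac{1-\delta}{1-\gamma}}^{1-\gamma} = \exp\inparen{-D(\gamma || \delta)}, \]
where $D(\gamma || \delta) = \gamma\ln\frac{\gamma}{\delta} + (1-\gamma)\ln\frac{1-\gamma}{1-\delta}$ is the binary relative entropy; raising to the $L$-th power yields the stated inequality. (The degenerate cases $\gamma = 1$ and $\delta = 0$ are immediate, the former directly from the hypothesis applied to $S = [L]$.)

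The one step that is not completely routine — and the only place the \emph{dependence} of the $X_i$ has to be handled — is the moment generating function bound: I must expand the product over $i$ into its $2^L$ monomials and apply the hypothesis to each $\E{\prod_{i\in S}X_i}$ \emph{separately}, which is legitimate only because, for $t > 0$, each monomial carries a nonnegative coefficient $(e^t-1)^{|S|}$. That nonnegativity is exactly what lets the weaker ``AND'' bound substitute for genuine independence. Everything after that is the classical Chernoff computation for Bernoulli tails — the optimization over $t$ and the algebra identifying the exponent with $D(\gamma || \delta)$ — which I would either cite or dispatch in a line or two.
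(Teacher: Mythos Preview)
Your proof is correct. The paper does not prove this statement at all; it merely cites it as Theorem~3.1 of~\cite{IK10} and invokes it as a black box, so there is no ``paper's proof'' to compare against. Your argument is the standard one (and is essentially the proof in~\cite{IK10}): expand the moment generating function $\E{\prod_i(1+(e^t-1)X_i)}$ into monomials indexed by subsets $S$, use the hypothesis termwise (legitimate since $t>0$ makes each coefficient nonnegative), resum to the Bernoulli MGF $(1+(e^t-1)\delta)^L$, and optimize $t$ to extract $D(\gamma\|\delta)$. Nothing is missing.
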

Lemma \ref{lem:pathsaregood} follows immediately.

\end{document}